\newcommand{\qw}[1][-1]{\ar @{-} [0,#1]}
\newcommand{\gate}[1]{*{\xy *+<.6em>{#1};p\save+LU;+RU **\dir{-}\restore\save+RU;+RD **\dir{-}\restore\save+RD;+LD **\dir{-}\restore\POS+LD;+LU **\dir{-}\endxy} \qw}
\newcommand{\measureD}[1]{*{\xy*+=+<.5em>{\vphantom{\rule{0em}{.1em}#1}}*\cir{r_l};p\save*!R{#1} \restore\save+UC;+UC-<.5em,0em>*!R{\hphantom{#1}}+L **\dir{-} \restore\save+DC;+DC-<.5em,0em>*!R{\hphantom{#1}}+L **\dir{-} \restore\POS+UC-<.5em,0em>*!R{\hphantom{#1}}+L;+DC-<.5em,0em>*!R{\hphantom{#1}}+L **\dir{-} \endxy} \qw}
\newcommand{\multimeasureD}[2]{*+<1em,.9em>{\hphantom{#2}}\save[0,0].[#1,0];p\save !C *{#2},p+LU+<0em,0em>;+RU+<-.8em,0em> **\dir{-}\restore\save +LD;+LU **\dir{-}\restore\save +LD;+RD-<.8em,0em> **\dir{-} \restore\save +RD+<0em,.8em>;+RU-<0em,.8em> **\dir{-} \restore \POS !UR*!UR{\cir<.9em>{r_d}};!DR*!DR{\cir<.9em>{d_l}}\restore \qw}
\newcommand{\ghost}[1]{*+<1em,.9em>{\hphantom{#1}} \qw}
\newcommand{\Qcircuit}[1][0em]{\xymatrix @*=<#1>}
\newcommand{\pureghost}[1]{*+<1em,.9em>{\hphantom{#1}}}
\newcommand{\multiprepareC}[2]{*+<1em,.9em>{\hphantom{#2}}\save[0,0].[#1,0];p\save !C
  *{#2},p+RU+<0em,0em>;+LU+<+.8em,0em> **\dir{-}\restore\save +RD;+RU **\dir{-}\restore\save
  +RD;+LD+<.8em,0em> **\dir{-} \restore\save +LD+<0em,.8em>;+LU-<0em,.8em> **\dir{-} \restore \POS
  !UL*!UL{\cir<.9em>{u_r}};!DL*!DL{\cir<.9em>{l_u}}\restore}
\newcommand{\prepareC}[1]{*{\xy*+=+<.5em>{\vphantom{#1\rule{0em}{.1em}}}*\cir{l^r};p\save*!L{#1} \restore\save+UC;+UC+<.5em,0em>*!L{\hphantom{#1}}+R **\dir{-} \restore\save+DC;+DC+<.5em,0em>*!L{\hphantom{#1}}+R **\dir{-} \restore\POS+UC+<.5em,0em>*!L{\hphantom{#1}}+R;+DC+<.5em,0em>*!L{\hphantom{#1}}+R **\dir{-} \endxy}}
\newcommand{\poloFantasmaCn}[1]{{{}^{#1}_{\phantom{#1}}}}
\newcommand{\Tr}{\operatorname{Tr}}
\newcommand{\supp}{\operatorname{supp}}
\newcommand{\spn}{\operatorname{span}}
\newcommand{\hilb}[1]{\mathcal{#1}}
\newtheorem{thm}{Theorem}
\newtheorem{lmm}{Lemma}
\newtheorem{cor}{Corollary}
\newtheorem{dfn}{Definition}
\newcommand{\cc}[3]{\mathcal{P}^{{#1}\to{#2}}_{#3}}
\newcommand{\cs}[1]{C_{#1}}
\newcommand{\qs}[1]{Q_{#1}}
\newcolumntype{?}{!{\vrule width 1.2pt}}
\definecolor{blue1}{rgb}{0.03, 0.27, 0.49}
\begin{document}

\title{No-hypersignaling principle}

\author{Michele \surname{Dall'Arno}}

\email{cqtmda@nus.edu.sg}

\affiliation{Centre for  Quantum Technologies, National
  University of  Singapore, 3 Science Drive  2, 117543,
  Singapore}

\author{Sarah \surname{Brandsen}}

\email{sbrandse@caltech.edu}

\affiliation{Centre for  Quantum Technologies, National
  University of  Singapore, 3 Science Drive  2, 117543,
  Singapore}

\author{Alessandro \surname{Tosini}}

\email{alessandro.tosini@unipv.it}

\affiliation{QUIT    group,   Physics    Dept.,   Pavia
  University, and  INFN Sezione di Pavia,  via Bassi 6,
  27100 Pavia, Italy}

\author{Francesco \surname{Buscemi}}

\email{buscemi@is.nagoya-u.ac.jp}

\affiliation{Graduate  School  of  Informatics,  Nagoya
  University, Chikusa-ku, 464-8601 Nagoya, Japan}

\author{Vlatko \surname{Vedral}}

\email{phyvv@nus.edu.sg}

\affiliation{Atomic   and   Laser  Physics,   Clarendon
  Laboratory, University of  Oxford, Parks Road, Oxford
  OX13PU, United Kingdom}

\affiliation{Centre for  Quantum Technologies, National
  University of  Singapore, 3 Science Drive  2, 117543,
  Singapore}

\begin{abstract}
  A paramount  topic in quantum foundations,  rooted in
  the study  of the EPR paradox  and Bell inequalities,
  is that of characterizing  quantum theory in terms of
  the space-like  correlations it allows. Here  we show
  that to focus only  on space-like correlations is not
  enough: we  explicitly construct  a toy  model theory
  that, while  not contradicting classical  and quantum
  theories  at the  level  of space-like  correlations,
  still displays an anomalous behavior in its time-like
  correlations.   We call  this anomaly,  quantified in
  terms   of  a   specific   communication  game,   the
  ``hypersignaling'' phenomena.  We hence conclude that
  the  ``principle  of  quantumness,''  if  it  exists,
  cannot  be found  in  space-like correlations  alone:
  nontrivial  constraints need  to be  imposed also  on
  time-like   correlations,   in   order   to   exclude
  hypersignaling theories.
\end{abstract}

\maketitle

One of the main tenets in modern physics is that if two
space-like separated  events are correlated,  then such
correlations       must       not       carry       any
information~\cite{Ein05}. This assumption, constituting
the so-called \textit{no-signaling  principle}, was the
starting  point used  by Bell~\cite{Bel64}  to quantify
and  compare   space-like  correlations   of  different
theories on even grounds---an  idea of vital importance
for his argument about the EPR paradox~\cite{EPR35} and
the derivation of  his famous inequality. Subsequently,
due     to      seminal     works      by     Tsirelson
(Cirel'son)~\cite{Tsi80}      and      Popescu      and
Rohrlich~\cite{PR94},   it   became  clear   that   the
no-signaling   principle  alone   is   not  enough   to
characterize   ``physical''  space-like   correlations:
non-signaling   space-like   correlations  allowed   by
quantum theory form a \textit{strict} subset within the
set of all non-signaling correlations~\cite{Pop2014}.

A  natural   question  is  then  to   try  to  identify
additional   principles   that,   together   with   the
no-signaling  principle, may  be able  to rule  out all
super-quantum   non-signaling  correlations   at  once.
Various   ideas  have   been  proposed,   ranging  from
complexity theory, e.g.  the collapse of the complexity
tower~\cite{vanDam05} to information  theory, e.g.  the
information     causality    principle~\cite{PPKSWZ09}.
However, none  of these  has been able  to characterize
the   quantum/super-quantum  boundary   in  full.    In
particular,  an outstanding  open  question is  whether
quantum  theory can  be characterized  in terms  of the
space-like correlations it allows~\cite{Pop2014}.

In this  paper, we show  that this cannot be  done: any
approach to  characterize quantum theory based  only on
space-like   correlations  is   necessarily  incomplete
unless   it   also   takes   into   account   time-like
correlations   as  well.    Our   approach,  which   is
completely   unrelated  to   the   study  of   temporal
correlations               \textit{\`a              la}
Leggett--Garg~\cite{BKMPP15, BE14,  BMKG13, MKTLSPK14},
considers   the   elementary  resource   of   noiseless
communication  and the  input/output correlations  that
can   be  so   established.    By   analogy  with   the
no-signaling principle, we operationally introduce what
we  call  the  ``no-hypersignaling  principle,''  which
roughly states  that any input/output  correlation that
can  be obtained  by  transmitting  a composite  system
should also be obtainable by independently transmitting
its constituents.  As  obvious as this may  look (it is
indeed so in classical  and quantum theories), the fact
that   quantum  theory   obeys  the   no-hypersignaling
principle  (as  we  define  it) is  in  fact  a  highly
nontrivial consequence  of a  recent result  by Frenkel
and  Weiner~\cite{FW15}.   We   also  notice  that  the
no-hypersignaling   principle  is   not  related   with
phenomena  such  as  superadditivity of  capacities  of
noisy quantum channels~\cite{Hast09}.

We then  construct a  toy model theory,  which violates
the  no-hypersignaling  principle, but  only  possesses
classical  space-like  correlations.    As  such,  this
theory   (and  other   analogous  theories)   would  go
undetected  in  any   test  involving  only  space-like
correlations, despite  displaying the  anomalous effect
of hypersignaling.  On the technical side, our model is
closely       related       to       the       standard
implementation~\cite{Bar05,     Bar07,     DT10}     of
Popescu--Rohrlich~\cite{PR94}             super-quantum
non-signaling space-like correlations (or ``PR-boxes,''
for  short).  However,  while the  PR-box model  theory
relies  on  entangled   states  to  outperform  quantum
\textit{space}-like  correlations,  our  hypersignaling
model  relies  on  \textit{entangled  measurements}  to
outperform  quantum   \textit{time}-like  correlations.
Nonetheless, since  in our model only  separable states
are available, no  super-quantum space-like correlation
can be obtained.  Therefore,  while the standard PR-box
model  theory can  be ruled  out  on the  basis of  its
super-quantum   space-like   correlations,  the   model
proposed here can only be ruled out by the principle of
no-hypersignaling.

\emph{The   No-Hypersignaling   Principle}.    ---   In
general, the starting point of  a physical theory is to
define its elementary  systems.  In \textit{generalized
  probabilistic     theories}     (see     Supplemental
Material~\cite{SM},   and  Refs.\cite{S07,   CDP11})  a
system  $S =  (\mathcal{S}, \mathcal{E})$  is typically
defined by giving  a set of states  $\mathcal{S}$ and a
set of effects $\mathcal{E}$, representing respectively
the preparations  and the  observations of  the system.
States    can   be    arranged   to    form   ensembles
$\{\Omega_0,  \Omega_1,\cdots \}$  and  effects can  be
arranged  to  form  measurements  $\{E_0,E_1,\cdots\}$.
The theory must also comprise  a rule for computing the
conditional  probability of  any effect  on any  state.
For example, in quantum  theory, a system is associated
with a $d$-dimensional Hilbert space $\hilb{H}$, states
and effects  are represented by  positive semi-definite
operators on $\hilb{H}$,  and conditional probabilities
are given  by the Born  (trace) rule.  The  theory must
also include  a set  of transformations  mapping states
into states (or  effects into effects): in  the case of
quantum  theory, this  is the  set of  quantum channels
(i.e.  completely positive  and trace preserving linear
maps).

Given an elementary system, an important role is played
by   its  \textit{dimension}~\cite{BKLS14},   which  is
expected  to  depend  solely   on  the  set  of  states
$\mathcal{S}$  and  effects $\mathcal{E}$.   Since  one
usually  assumes that  convex  mixtures  of states  and
effects can  always be  considered (following  the idea
that the randomization of different experimental setups
is  in  itself  another valid  experiment),  by  linear
extension it  is natural  to introduce the  real vector
spaces          $\mathcal{S}_{\mathbb{R}}$          and
$\mathcal{E}_{\mathbb{R}}$,  generated  by real  linear
combinations   of   elements   of   $\mathcal{S}$   and
$\mathcal{E}$,  respectively.  Notice  that in  typical
situations  $\mathcal{E}_{\mathbb{R}}$  coincides  with
the      set     of      linear     functionals      on
$\mathcal{S}_{\mathbb{R}}$.   One soon  arrives at  the
following definition:

\begin{dfn}[Linear dimension]
  The  linear dimension  of  a system  $S$, denoted  by
  $\ell(S)$, is  defined as  the dimension of  the real
  vector    space     $\mathcal{S}_{\mathbb{R}}$    (or
  $\mathcal{E}_{\mathbb{R}}$, which is  the same in the
  finite dimensional case considered in this work).
\end{dfn}

The  linear dimension  of a  classical system  with $d$
extremal  states is  equal to  $d$, whereas  the linear
dimension  of  a  quantum   system  associated  with  a
$d$-dimensional   Hilbert    space   is    $d^2$.   For
convenience,  we  denote  a  $d$-dimensional  classical
system   by  $\cs{d}$   and  a   quantum  system   with
$d$-dimensional Hilbert  space by $\qs{d}$ so  that, in
formula, $\ell(\cs{d})=d$ and $\ell(\qs{d})=d^2$.

There are various  ways proposed to make  sense of this
discrepancy:  a  typical  solution   is  to  define  an
``operational''  dimension  as  the maximum  number  of
states   that  can   be  distinguished   in  a   single
measurement, see, e.g.,  Ref.~\cite{HW12}. In this way,
even though the linear dimension of a quantum system is
$d^2$, the  operational dimension turns out  to be $d$,
thus matching  the dimension of the  underlying Hilbert
space.  In  what follows,  we introduce  an alternative
operational  definition  of  dimension  which  is  both
widely applicable and is independent of any arbitrarily
chosen task, such as perfect state discrimination.

In order to make our  analysis more concrete we need to
introduce  some notation.   Given two  finite alphabets
$\mathcal{X}=\{x\}$ and  $\mathcal{Y}=\{y\}$ containing
$m$ and $n$ letters,  respectively, let us consider the
set of all $m$-input/$n$-output conditional probability
distributions  $p_{y|x}$  that   can  be  generated  by
transmitting  one  elementary  system  $S$,  when  free
shared  randomness  between   sender  and  receiver  is
allowed. With this,  we mean that the input  $x$ can be
``encoded''          on          some          ensemble
$\{\Omega_x^{(\lambda)}:x\in\mathcal{X}  \}$ while  the
output   letter  $y$   is   ``decoded''  whenever   the
corresponding outcome  is obtained in  some measurement
$\{E_y^{(\lambda)}:y\in\mathcal{Y}\}$,  where $\lambda$
parameterizes  the shared  random variable.   We denote
the   convex   set   of  all   such   correlations   by
$\cc{m}{n}{S}$.   For  example, $\cc{m}{n}{\cs{d}}$  is
the   set  of   all  $m$-input/$n$-output   conditional
probability distributions that can be obtained by means
of  a $d$-dimensional  classical noiseless  channel and
shared random  data.  Equivalently, $\cc{m}{n}{\cs{d}}$
can be characterized as the polytope whose vertices are
exactly all  those $p_{y|x}$  with either null  or unit
entries and  such that $p_y  := \sum_x p_{y|x}  \neq 0$
for at most $d$ different values of $x$.

Crucial in our  analysis is a recent  result by Frenkel
and Weiner~\cite{FW15},  stating that, in  the presence
of  shared   classical  randomness,   any  input/output
correlation obtainable  with a  $d$-dimensional quantum
system  is  also   obtainable  with  a  $d$-dimensional
classical system (and vice versa)---in formula,
\begin{align*}
  \cc{m}{n}{\cs{d}}=\cc{m}{n}{\qs{d}},
\end{align*}
for all  (finite) values of  $m$ and $n$.  We  are thus
motivated to introduce the following definition:
\begin{dfn}[Signaling dimension]
  The signaling  dimension of a system  $S$, denoted by
  $\kappa(S)$, is  defined as the smallest  integer $d$
  such  that  $\cc{m}{n}{S}\subseteq\cc{m}{n}{\cs{d}}$,
  for all $m$ and $n$.
\end{dfn}
Note that $\kappa(S)$ equals  the usual dimension, both
in  classical  and  quantum  theories, and  is  thus  a
natural  candidate  for  an operational  definition  of
dimension.  Moreover,  $\kappa(S)$ only depends  on the
structure of  $\mathcal{S}$ and  $\mathcal{E}$, without
relying  on  the  (arbitrarily   made)  choice  of  any
specific protocol  such as state  discrimination. Also,
due to the already  mentioned result of~\cite{FW15}, in
what   follows   we   will  simply   use   the   symbol
$\cc{m}{n}{d}$ to denote $\cc{m}{n}{\cs{d}}$, since the
fact that the underlying theory is classical or quantum
is immaterial for the problem at hand.

\begin{figure}[t!]
  \begin{overpic}[width=.7\columnwidth]{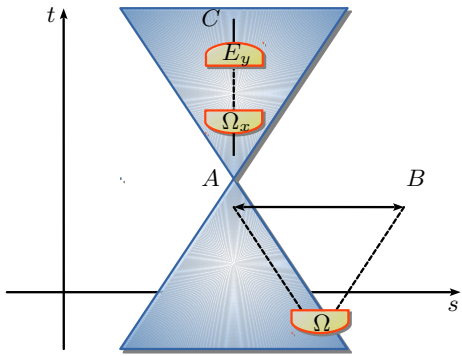}
    \put (43, 37) {$A$}
    \put (88, 37) {$B$}
    \put (43, 72) {$C$}
    \put (67.5, 5.1) {$\Omega$}
    \put (47.5, 49.5) {$\Omega_x$}
    \put (47.5, 64.5) {$E_y$}
    \put (97, 9.5) {$s$}
    \put (9, 73) {$t$}
  \end{overpic}
  \caption{\textbf{Space-like       and       time-like
      correlations.}  Events $A$ and $B$ are space-like
    separated, i.e.  information cannot travel from the
    one   to  the   other  (no-signalling   principle).
    Correspondingly,  they  can only  share  space-like
    correlations, previously distributed in the form of
    a bipartite state $\Omega$.  Events $A$ and $C$ are
    time-like  separated, i.e.  information can  indeed
    travel from $A$ to $C$: such information is encoded
    into the states $\{ \Omega_x \}$, and later decoded
    by   the  measurement   $\{   E_y   \}$.   As   the
    no-signalling   principle   constrains   space-like
    correlations,   the   no-hypersignaling   principle
    constrains time-like correlations.}
  \label{fig:minkowski}
\end{figure}

The  no-hypersignaling   principle  is   introduced  by
looking at how the  dimension behaves under composition
of elementary systems. In order to do this, we need the
theory to provide us with a rule for combining multiple
elementary systems into a  larger one.  For example, in
quantum theory,  the composition  rule is given  by the
tensor product  of the  underlying Hilbert  spaces. For
the  sake of  the  present  paper, we  do  not need  to
understand the  various possible mechanisms  with which
elementary  systems can  be  composed: given  a set  of
elementary   systems   $\{S_k\}$,   we   denote   their
composition by $\otimes_k S_k$.  Notice that the tensor
product  should here  be interpreted  only as  a symbol
denoting  composition, and  is not  necessarily related
with the  actual tensor  product of vector  spaces (the
interested reader may refer to Ref.~\cite{BBLW08}).

However, it  is natural to assume  that the composition
rule  must  satisfy  some  sensible  constraints.   For
example,  a first  condition that  must be  met by  any
self-consistent theory is that  any circuit obtained as
the composition of systems,  states, effects, and their
transformations should produce non-negative conditional
probabilities.    An  additional   condition  is   that
space-like    correlations   obey    the   no-signaling
principle,  so  that   any  instantaneous  exchange  of
information is forbidden, see Fig.~\ref{fig:minkowski}.
There are still other,  more subtle conditions that can
be considered.

For example,  Ref.~\cite{HW12} considers  the condition
of \textit{local tomography}.   This requires the state
of  a   composite  system  to  be   determined  by  the
statistics  of measurements  done independently  on its
constituents.  This principle is not as obvious as that
of  no-signaling,   however,  it  arguably   remains  a
sensible requirement for a theory that does not want to
be ``too holistic'' (namely, the state of any composite
systems  should  always  be locally  accessible).   The
principle  of  local  tomography is  related  with  the
notion of  dimension: a  theory is  locally tomographic
whenever  the linear  dimension of  a composite  system
does not exceed the product of the linear dimensions of
its constituents, in formula,
\begin{equation}
  \label{eq:local-tomo}
  \ell(\otimes_kS_k)\le\prod_k\ell(S_k).
\end{equation}
In fact, without resorting  to truly exotic, \textit{ad
  hoc} theories, the linear  dimension of the composite
system cannot be strictly less  than the product of the
linear   dimensions  of   its   constituents,  so   the
inequality in  Eq.~(\ref{eq:local-tomo}) can  be safely
replaced with the equal  sign (see Ref.~\cite{HW12} for
further details on the concept of linear dimension).

The  no-hypersignaling  principle  is the  analogue  of
Eq.~(\ref{eq:local-tomo})  stated   for  the  signaling
dimension, rather  than the  linear dimension.  We thus
have the following definition:

\begin{dfn}[No-hypersignaling principle]
  A theory  is non-hypersignaling  if and only  if, for
  any  set   of  systems  $\{S_k  \}$   with  signaling
  dimensions $\kappa(S_k)$, the  signaling dimension of
  the composite system $\otimes_k S_k$ satisfies
  \begin{align}
    \kappa(\otimes_k S_k)\le \prod_k \kappa(S_k).
  \end{align}
\end{dfn}

In particular, the no-hypersignaling principle requires
that,  given two  copies of  the same  system $S$  with
signaling  dimension $d$,  the  signaling dimension  of
$S\otimes S$ cannot exceed $d^2$, in formula
\begin{align*}
  \cc{m}{n}{S}    \subseteq    \cc{m}{n}{d}    \implies
  \cc{m}{n}{S^{\otimes 2}} \subseteq \cc{m}{n}{d^2},
\end{align*}
for  all $m$  and $n$.   The situation  is depicted  in
Fig.~\ref{fig:dimadditivity}.

\begin{figure}[t!]
  \begin{overpic}[width=.6\columnwidth]{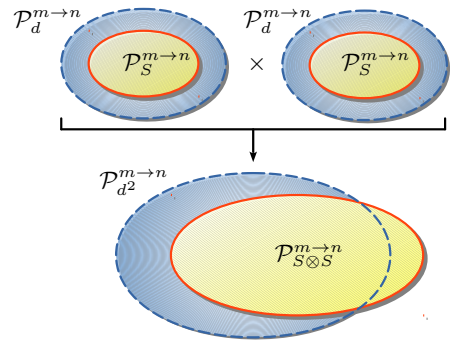}
    \put (16, 69) {$\cc{m}{n}{S}$}
    \put (72, 69) {$\cc{m}{n}{S}$}
    \put (48, 69) {$\times$}
    \put (-12, 80) {$\cc{m}{n}{d}$}
    \put (47, 80) {$\cc{m}{n}{d}$}
    \put (10, 39) {$\cc{m}{n}{d^2}$}
    \put (55, 20) {$\cc{m}{n}{S \otimes S}$}
  \end{overpic}
  \caption{{\bf   Illustration   of  a   hypersignaling
      theory}.   While the  system $S$  alone satisfies
    $\cc{m}{n}{S} \subseteq \cc{m}{n}{d}$, and thus has
    signaling  dimension $d$,  the composite  system $S
    \otimes  S$  has  a  signaling  dimension  strictly
    larger than $d^2$.}
  \label{fig:dimadditivity}
\end{figure}

Roughly  speaking,  while  the  no-signaling  principle
prevents  \textit{space}-like  separated  parties  from
communicating, the no-hypersignaling principle prevents
\textit{time}-like separated parties from communicating
``too much,''  see again  Fig.~\ref{fig:minkowski}.  It
may help to think  that the no-hypersignaling principle
guarantees   that    the   input/output   correlations,
attainable when transmitting two elementary systems, do
not  depend   on  whether  the  systems   are  actually
transmitted in series or in parallel.

Before proceeding,  it will be useful  to interpret the
no-hypersignaling principle in terms of a communication
game.  To this aim, let us denote a composite system by
$\bar   S=\otimes_kS_k$   and   by  $K$   the   product
$\prod_k\kappa(S_k)$ of the local signaling dimensions.
It is  therefore a  straightforward application  of the
hyperplane   separation  theorem   that  a   theory  is
hypersignaling if  and only if,  for some $m$  and $n$,
there exists a  conditional probability distribution $p
\in \cc{m}{n}{\bar  S}$ and an $m\times  n$ real matrix
$g$, such that
\begin{align}
  \label{eq:iff}
  g^T \cdot p > \max_{q \in \cc{m}{n}{K}} g^T \cdot q,
\end{align}
where we use the notation $A^T \cdot B$ to indicate the
Hilbert-Schmidt    dot-product   $\sum_{x,y}    A_{x,y}
B_{x,y}=\Tr[A^TB]$.    Notice  that   the  maximization
problem  in the  r.h.s.   of  Eq.~\eqref{eq:iff} is  in
closed form:  by linearity  the maximum is  attained on
the vertices of the  polytope $\cc{m}{n}{K}$, which are
finite  in  number  and computed  in  the  Supplemental
Material~\cite{SM}.

The  matrix  $g$  can  be  interpreted  as  the  payoff
function  defining  a  communication  game,  where  the
sender inputs $x$ and the receiver outputs $y$, leading
to  the  corresponding  payoff  $g_{x,y}$.   From  this
viewpoint, Eq.~\eqref{eq:iff} represents the fact that,
for any game  $g$, the average payoff  of the composite
system $\bar S$ never exceeds the payoff of the product
of  its  parts  $\{S_k\}$.    A  general  framework  to
consider   such    game-theoretic   interpretation   is
developed  in the  Supplemental Material~\cite{SM},  by
extending    the    theory    of    extremal    quantum
measurements~\cite{Par99,     DLP05}     to     general
probabilistic theory.

\emph{The  counterexample}.  ---  In  what follows,  we
exploit our general framework  to construct a toy model
theory that  violates the  no-hypersignaling principle,
namely  such  that  the  signalling  dimension  of  the
composite  system is  larger  than the  product of  the
signalling  dimensions of  its  parts.   Our toy  model
theory  is  explicitly  derived   along  with  all  its
constituents: elementary and composite systems, states,
measurements, and dynamics.  In the process, we clarify
the  relation between  no-signaling, no-hypersignaling,
local tomography,  and information  causality, arriving
at the conclusion  that the no-hypersignaling principle
is independent of  all of these, and  must therefore be
assumed \textit{separately}.

The elementary system here is  the same as that used to
reproduce PR correlations  in Refs.~\cite{Bar05, Bar07,
  DT10}.   The states  and  effects  of the  elementary
system are  vectors in  $\mathbb{R}^3$, and  there only
exist four  extremal states and four  extremal effects,
namely   $\{   \omega_x   \}_{x=0}^3$   and   $\{   e_y
\}_{y=0}^3$.  As  shown explicitly in  the Supplemental
Material~\cite{SM} (see also Refs.~\cite{GMCD10, SB10})
all possible bipartite extensions can be given in terms
of $24$ extremal bipartite  states, namely $\{ \Omega_x
\}_{x = 0}^{23}$, and  $24$ extremal bipartite effects,
namely $\{  E_y \}_{y=0}^{23}$.  The first  $16$ states
(i.e.  $0 \le  x \le  15$) and  the first  $16$ effects
(i.e.  $0 \le  y  \le 15$)  are  factorized, while  the
remaining ones are all entangled.

Due  to  self-consistency   and  the  requirement  that
non-trivial   reversible   dynamics   exist,   however,
bipartite   states  and   effects   cannot  be   chosen
arbitrarily.  As  explicitly shown in  the Supplemental
Material~\cite{SM},  only  the following  three  models
satisfy all requirements:

\begin{description}[leftmargin=0pt]
\item[PR  Model]  this  is  the theory  used  to  model
  PR-boxes~\cite{Bar05, Bar07, DT10}.   It contains all
  possible  extremal  bipartite states,  including  the
  eight    entangled   ones    (i.e.    $\{    \Omega_x
  \}_{x=0}^{23}$).  Self-consistency  then imposes that
  only  extremal factorized  effects are  allowed (i.e.
  $\{ E_y \}_{y=0}^{15}$).
\item[HS Model] this is the  theory that we prove to be
  hypersignaling  (HS).   It contains  only  factorized
  extremal  states (i.e.   $\{\Omega_x \}_{x=0}^{15}$),
  but allows  for all  possible extremal  effects, even
  entangled ones (i.e. $\{E_y \}_{y=0}^{23}$).
\item[Hybrid  Models]  in  addition to  all  factorized
  states  and effects,  two  entangled  states and  two
  entangled  effects   are  allowed.   Self-consistency
  singles   out   only    two   such   models:   states
  $\{\Omega_{20},\Omega_{22}\}$       with      effects
  $\{E_{20},E_{22}\}$,             or            states
  $\{\Omega_{21},\Omega_{23}\}$ with effects $\{E_{21},
  E_{23}\}$.
\end{description}

Due to  the presence of bipartite  entangled states $\{
\Omega_x  \}_{x=16}^{23}$, the  PR model  is compatible
with super-quantum space-like correlations, and this is
actually the reason why it  was introduced in the first
place.     However,    we    show    in    Supplemental
Material~\cite{SM}, that the  lack of entangled effects
prevents the PR model  from being hypersignaling.  In a
perfectly  complementary  way,   the  HS  model  cannot
violate  any  Bell  inequality,  due  to  the  lack  of
entangled  states.  However,  in what  follows we  show
that,  due  to  the  presence  of  bipartite  entangled
effects $\{ E_y \}_{x=16}^{23}$,  the HS model violates
the no-hypersignaling principle.

Let  us   start  by  noticing  (see   the  Supplemental
Material~\cite{SM})  that the  elementary system  has a
signaling dimension  of two  and is thus  equivalent to
the  exchange  of  one classical  bit.   Therefore,  to
provide  a  counterexample   to  the  no-hypersignaling
principle, we need to provide a correlation $\xi$ which
is compatible with the composition of two elementary HS
systems, but cannot be  obtained by exchanging only two
classical bits.

One such a conditional probability has seven inputs and
seven outputs, and is given by
\begin{align}\label{eq:specimen}
  \xi = \frac{1}{2}
  \begin{pmatrix}
    1 & 0 & 0 & 0 & 0 & 1 & 0 \\
    0 & 1 & 0 & 0 & 0 & 0 & 1 \\
    0 & 1 & 1 & 0 & 0 & 0 & 0 \\
    0 & 0 & 1 & 0 & 0 & 0 & 1 \\
    0 & 0 & 0 & 1 & 0 & 1 & 0 \\
    0 & 0 & 0 & 1 & 0 & 0 & 1 \\
    0 & 0 & 0 & 0 & 1 & 1 & 0
  \end{pmatrix}.
\end{align}
This is  explicitly obtained by applying  the formalism
developed in the Supplemental Material~\cite{SM}.  More
explicitly,  the  rows  of $\xi$  are  the  conditional
probabilities  obtained  by   measuring  the  following
measurement: $\left  \{ \frac{1}{8}  E_{0}, \frac{1}{8}
  E_{1},   \frac{1}{8}    E_{6},   \frac{1}{8}   E_{8},
  \frac{1}{8}  E_{10}, \frac{1}{8}  E_{15}, \frac{1}{4}
  E_{23}  \right \},$  on each  of the  following seven
states: $ \left  \{ \Omega_{0}, \Omega_{2}, \Omega_{6},
  \Omega_{7},  \Omega_{12},   \Omega_{13},  \Omega_{15}
\right \}$.

The fact that $\xi$  does not belong to $\cc{7}{7}{4}$,
and  thus violates  the HS  principle, is  an immediate
consequence   of  the   characterization  of   polytope
$\cc{7}{7}{4}$    provided    in    the    Supplemental
Material~\cite{SM}.

Since  $\xi\not\in\cc{7}{7}{4}$,  there exists  a  game
which  violates  Eq.~\eqref{eq:iff}.  Indeed,  consider
the following game matrix $g$:
\begin{align*}
  g = \frac1{21}
  \begin{pmatrix}
    2 & 0 & 0 & 0 & 0 & 1 & 0 \\
    0 & 2 & 0 & 0 & 0 & 0 & 2 \\
    0 & 2 & 2 & 0 & 0 & 0 & 0 \\
    0 & 0 & 2 & 0 & 0 & 0 & 2 \\
    0 & 0 & 0 & 1 & 0 & 1 & 0 \\
    0 & 0 & 0 & 1 & 0 & 0 & 0 \\
    0 & 0 & 0 & 0 & 2 & 1 & 0
  \end{pmatrix}.
\end{align*}
It  immediately follows  by  explicit computation  that
$g^T  \cdot   \xi  =   \frac12$,  while   $\max_{q  \in
  \cc{7}{7}{4}} g^T  \cdot q  = \frac{10}{21}<\frac12$.
This  latter  result  can  be  verified  by  explicitly
computing the  payoff associated with game  $g$ for all
of the  vertices of the polytope  $\cc{7}{7}{4}$, which
are  $359863$ in  number as  shown in  the Supplemental
Material~\cite{SM}.  The interested reader can play the
game  of  selecting  $4$  columns of  $g$  and  further
selecting  one entry  per row  (within these  columns),
with  the aim  of maximizing  the sum  of the  selected
entries.  They  will then verify that  no strategy will
lead to a payoff larger than $\frac{10}{21}$.

\begin{figure}[t!]
\vspace{0.4cm}
  \begin{overpic}[width=.4\columnwidth]{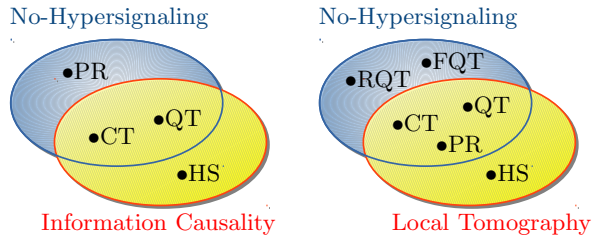}
    \put (0, 71) {\color{blue1}No-Hypersignaling}
    \put (12, -8) {\color{red}Information Causality}
    \put (20, 50) {$\bullet$PR}
    \put (30, 25) {$\bullet$CT}
    \put (55, 32) {$\bullet$QT}
    \put (64, 11) {$\bullet$HS}
  \end{overpic}\qquad
  \begin{overpic}[width=.4\columnwidth]{fig03}
    \put (0, 71) {\color{blue1}No-Hypersignaling}
    \put (28, -8) {\color{red}Local Tomography}
    \put (28, 30) {$\bullet$CT}
    \put (55, 37) {$\bullet$QT}
    \put (45, 22) {$\bullet$PR}
    \put (10,47) {$\bullet$RQT}
    \put (39, 54) {$\bullet$FQT}
    \put (64, 11) {$\bullet$HS}
  \end{overpic}
\vspace{0.2cm}
\caption{\textbf{No-Hypersignaling    vs    Information
    Causality  and  vs  Local Tomography}.   Left:  the
  diagram  compares   theories  satisfying  information
  causality  (yellow  set)  and  the  no-hypersignaling
  principle  (blue  set):  CT  (classical  theory),  QT
  (quantum theory), PR Model  (the toy model theory for
  PR-boxes),  and  HS  Model  (the  locally  classical,
  hypersignaling    theory    constructed    in    this
  paper).  Right: comparison  between local  tomography
  and  no-hypersignaling  as  two features  of  general
  probabilistic  theories.  Examples  of theories  that
  are non-hypersignaling  but violate  local tomography
  are  provided  by  real   quantum  theory  (RQT)  and
  fermionic  quantum theory  (FQT).   The  HS Model  is
  locally tomographic  but hypersignaling.  Finally CT,
  QT, and the PR Model lie in the intersection, as they
  obey both local  tomography and the no-hypersignaling
  principle.}
  \label{f:comparison}
\end{figure} 

\emph{Outlooks}.  ---  We have seen how  it is possible
to construct a generalized  probabilistic theory -- the
HS Model  -- that contradicts quantum  theory, but only
in time-like scenarios. This is consequence of the fact
that  the  HS Model  has  been  arranged so  that  only
separable  states  are  allowed.   In  this  way,  when
measurements  are restricted  to  be  separable due  to
locality constraints  (as it  is the case  when testing
space-like  correlations),  the  HS  Model  never  goes
beyond classical  theory.  However, the  possibility of
having  entangled measurements  enables hypersignaling,
thus   proving   that   the  HS   Model   indeed   goes
\textit{beyond} quantum theory in time-like scenarios.

It is now important to understand how hypersignaling is
logically  related with  other possible  ``anomalies,''
such  as  the  violation  of local  tomography  or  the
violation   of    information   causality.     If   any
hypersignaling theory  necessarily violates  also other
principles concerning space-like correlations, then one
could   rightly   argue    that   the   phenomenon   of
hypersignaling might  be ruled  out just by  looking at
space-like  correlations.  However,  the point  of this
paper  is   to  argue  the  opposite:   that  time-like
correlations   require   a   new   \textit{independent}
principle.

The fact that  hypersignaling and information causality
are  independent  is  easy   to  see.  As  a  necessary
condition for the violation of information causality is
the  presence of  entangled  states, and  since the  HS
Model only contains separable states, then the HS Model
necessarily   obeys   information  causality,   despite
allowing hypersignaling.  Vice versa,  we know that the
PR Model  violates information causality but,  since it
only allows  separable measurements, it  cannot display
any form of hypersignaling.   The situation is depicted
in Left Fig.~\ref{f:comparison}.

We    now   turn    to   the    condition   of    local
tomography~\cite{HW12}. From the explicit expression of
the  pure states  of the  HS Model,  it is  possible to
verify, as done in the Supplemental Material~\cite{SM},
that  the elementary  system $S$  has linear  dimension
$\ell(S)=3$ and that the  bipartite system $S\otimes S$
has linear  dimension $\ell(S\otimes  S )=9=\ell(S)^2$.
Thus the HS Model is locally tomographic, despite being
hypersignaling.   Vice  versa, there  exist  consistent
theories that obey  the no-hypersignaling principle and
yet are not locally tomographic.  As an example, let us
consider restrictions (for example, superselections) of
quantum  theory, as  introduced in  Ref.~\cite{DMPT14}.
Since such theories are restrictions of quantum theory,
they cannot  exhibit hypersignaling: if they  did, then
quantum theory would also exhibit hypersignaling, which
is    not   true.     For    example,   real    quantum
theory~\cite{HW12}      and      fermionic      quantum
theory~\cite{DMPT14} are  two possible  such restricted
quantum    theories.     However,    as    proved    in
Refs.~\cite{DMPT14,DMPT142,HW12}, both theories are not
locally  tomographic.  The  situation is  summarized in
Right Fig.~\ref{f:comparison}.

We also notice that the no-hypersignaling principle can
be   violated   by   theories    that   do   not   show
superadditivity    of    classical   capacities.     In
Ref.~\cite{MPP15}  the  authors  show  that  a  locally
tomographic   theory  cannot   feature  superadditivity
effects of  classical capacities.   Thus hypersignaling
does not necessarily imply superadditivity of classical
capacities,   because   the   HS   Model   is   locally
tomographic.   In   passing  by,  the   maximal  mutual
information for the hypersignaling correlation $\xi$ in
Eq.~(\ref{eq:specimen}) (numerically optimized over any
prior  probability distribution)  is  less than  $1.78$
bits, which is below the Holevo bound of $\log_24 = 2$.

One interesting question arises  from noting that while
the HS Model has  classical space-like correlations and
super-quantum time-like correlations,  the PR Model has
super-quantum  space-like  correlations  and  classical
time-like correlations.  Could it  be that a theory can
be super quantum only with respect to either space-like
or time-like correlations, but not both?  Could quantum
theory  have the  unique  distinction of  ``balancing''
between these two extrema? It turns out that the answer
is  no, and  follows  from the  example  of the  Hybrid
Models  derived   above.   In   order  to   obtain  the
hypersignaling $\xi$ in Eq.~(\ref{eq:specimen}) we need
seven factorized  states and seven effects  among which
only one, precisely $E_{23}$, is not factorized.  Since
$E_{23}$  is exactly  one  of  those entangled  effects
admitted in  the Hybrid Models,  we know that  the same
$\xi$  can  be surely  obtained  in  those models  too.
Moreover,  since in  the  Hybrid  Models two  entangled
states  are  also  available,  super-quantum  spacelike
correlations  can also  be created.  Hence, the  Hybrid
Models have  the ability to create  both space-like and
time-like super-quantum correlations.

Finally,  we  compare the  no-hypersignaling  principle
with two recently proposed and related principles, that
is,  {\em  dimension mismatch}~\cite{BKLS14}  and  {\em
  information   content}~\cite{CHHH14}.     Both   such
principles rule out superquantum  theories on the basis
of  the  correlations  achievable by  a  single-partite
system,   in   contrast  with   the   no-hypersignaling
principle which  requires composite  systems.  However,
they  achieve this  by considering  a more  complicated
setup,  where  the  choice  of the  information  to  be
decoded is not fixed but depends on an additional input
(a second  question) to  the receiver.   Moreover, both
the  dimension mismatch  principle and  the information
content  principle   rely  on   a  certain   degree  of
arbitrariness  in  the  criteria  chosen  to  benchmark
operational  theories:  dimension mismatch  is  defined
with respect  to an arbitrarily chosen  reference task,
i.e.  pairwise state  discrimination, while information
content  is  defined  with respect  to  an  arbitrarily
chosen     information     measure,    i.e.      mutual
information.   This    is   in   contrast    with   the
no-hypersignaling  principle proposed  here, where  the
full  set of  input-output  correlations is  considered
without   the    need   to   invoke    any   particular
discrimination task or information measure.

{\em  Acknowledgements}  The  authors are  grateful  to
Manik Banik for noticing two  oversights we made in the
definition  of  the two  hybrid  models  in a  previous
version  of   this  manuscript.    M.~D.   acknowledges
support  from  the   Singapore  Ministry  of  Education
Academic    Research   Fund    Tier   3    (Grant   No.
MOE2012-T3-1-009).  A.   T.  acknowledges  support from
the Templeton Foundation, No. 60609.  F.~B acknowledges
support  from the  Japan Society  for the  Promotion of
Science  (JSPS) KAKENHI,  Grant  No.  17K17796.   V.~V.
acknowledges support from the Ministry of Education and
the Ministry of Manpower (Singapore).



\section{Supplemental material}

Here we provide those technical results reported in the
letter ``No-hypersignaling  principle'' by  the present
authors  (M.  Dall'Arno,  S.  Brandsen,  A. Tosini,  F.
Buscemi, and  V. Vedral) that, not  being essential for
the presentation, were not included in the main text.

\subsection{General probabilistic theories}
\label{sec:gpt}

Generalized probabilistic theories  (GPTs) constitute a
very  general  framework,  suitable for  describing  an
arbitrary physical probabilistic  theory.  In this way,
the characteristic  quantum traits  can be  compared to
other  (in principle)  admissible behaviours,  with the
final goal  of seeking  for the physical  principles at
the basis of the quantumness  of nature.  As such, GPTs
proved to  be an  extremely useful  notion to  shed new
light on the apparently odd features of quantum theory.

The most  popular and  successful applications  of GPTs
aim   to   disclose   the  properties   of   space-like
correlations compatible with  special relativity and to
compare them  with the typical  space-like correlations
available   in   quantum   theory,  such   as   quantum
entanglement.    It   is   well  known   that   quantum
entanglement enables  two space-like  separated parties
to be correlated  in a way that would  be impossible if
only    classical    correlations    were    available.
Nevertheless,  quantum   correlations  are   still  non
signaling, in  the sense that they  cannot be exploited
to    give    instantaneous   (faster    than    light)
communication~\cite{EPR35}.   In  this respect,  hence,
quantum  correlations   are  compatible   with  special
relativity.

However, Ref.~\cite{PR94} shows that quantum space-like
correlations  are not  the  only  ones compatible  with
special  relativity,   but  that  there   exist  other,
super-quantum   and  yet   non  signaling,   space-like
correlations.  As noticed  by many authors \cite{Bar05,
  Bar07,  DT10}   such  super-quantum   non  signalling
correlations, usually referred  to as ``PR-boxes'', can
be interpreted in terms of a particular GPT.

The building  blocks of GPTs are  systems, here denoted
by  capital letters  $A, B,  C, \ldots$,  which can  be
composed  to   form  composite  systems,   for  example
$A\otimes B$ represents the composite system consisting
of subsystems  $A$ and  $B$. A system  $A$ is  given by
specifying how  it can  be prepared and  how it  can be
measured.  This  is  done  by giving  the  set  of  all
possible  states  and   all  possible  effects,  namely
$\mathcal{S}(A)$          and         $\mathcal{E}(A)$,
respectively. The  theory is then specified  by further
providing   a   complete  description   of   admissible
operations that any system in the theory can undergo.

Alongside  the mathematical  characterization of  these
sets, which ultimately defines  the theory, it is often
useful to  provide a graphical representation  for such
basic building blocks.  With systems depicted as wires,
each state is represented as
\begin{equation*}
  \begin{aligned}
    \Qcircuit @C=1em @R=.7em  @! R {\prepareC{\omega} &
      \poloFantasmaCn A\qw & \qw }
  \end{aligned} \; ,
\end{equation*}
where  $A$   is  the  system  prepared   in  the  state
$\omega\in   \mathcal{S}(A)$,   and  each   effect   is
represented as
\begin{equation*}
  \begin{aligned}
    \Qcircuit @C=1em  @R=.7em @! R {  & \poloFantasmaCn
      A\qw & \measureD{e} }
  \end{aligned} \; ,
\end{equation*}
where  $A$ is  the  system  undergoing the  observation
corresponding   to  the   effect  $e\in\mathcal{E}(A)$.
Composite  systems  are  then represented  by  multiple
parallel wires; for example
\begin{equation*}
  \begin{aligned}
    \Qcircuit @C=1em @R=.7em {
      \multiprepareC{1}{\Omega} &  \poloFantasmaCn A\qw & \qw \\
      \pureghost{\Omega} & \poloFantasmaCn B\qw & \qw }
  \end{aligned} \; ,
\end{equation*}
denotes the  bipartite state $\Omega$ of  the composite
system $A\otimes B$, while
\begin{equation*}
  \begin{aligned}
    \Qcircuit @C=1em @R=.7em @! R {
      & \poloFantasmaCn A \qw & \multimeasureD{1}{E}  \qw  \\
      & \poloFantasmaCn B \qw & \ghost{E} \qw }
  \end{aligned} \; ,
\end{equation*}
denotes  the  bipartite  effect $E$  of  the  composite
system $A \otimes B$.

The probabilistic  structure of  the theory  comes from
the rule that associates the probability $p_{e|\omega}$
of observing  any given effect $e  \in \mathcal{E}(A)$,
on      any      given      state      $\omega      \in
\mathcal{S}(A)$. Graphically, this is denoted as
\begin{equation*}
  \Qcircuit @C=1em @R=.7em @! R{ \prepareC{\omega}   &
    \poloFantasmaCn A
    \qw  & \measureD{e} 
  } \; = \; p_{e | \omega} \;.
\end{equation*}
Clearly,  any  closed   circuit,  however  complicated,
corresponds to a probability.

By construction,  states (resp., effects)  are positive
functionals on effects (resp., states):
\begin{align*}
  \omega:\mathcal{E}(A)\rightarrow         [0,1],\qquad
  e:\mathcal{S}(A)\rightarrow [0,1].
\end{align*} 
As such, it is  natural to consider linear combinations
of  states  and  linear  combinations  of  effects.  In
particular,  any convex  combination of  states (resp.,
effects) is itself an admissible state (resp., effect).
For this reason,  $\mathcal{S}(A)$ and $\mathcal{E}(A)$
are usually assumed to  be convex sets.  (The convexity
assumption can be relaxed  and theories with non-convex
state spaces,  such as Spekkens' toy  theory, have been
considered in the literature~\cite{S07}.)

Further, extending the linear combinations to arbitrary
real  coefficients,  one  can define  two  real  vector
spaces         $\mathcal{S}_{\mathbb{R}}(A)$        and
$\mathcal{E}_{\mathbb{R}}(A)$,  usually constructed  so
that one  is dual to  the other.  This means,  in other
words,  that   $\mathcal{E}_{\mathbb{R}}(A)$  coincides
with   the  set   of   all   linear  functionals   from
$\mathcal{S}_{\mathbb{R}}(A)$ to $\mathbb{R}$, and vice
versa.  (This  assumption is  sometimes referred  to as
the  no-restriction hypothesis,  and can  or cannot  be
made depending on the situation at hand.)

From these observations, a key feature of GPTs follows:
states and effects can always be represented as vectors
of a linear real space.   It is common then to restrict
to the  case of  GPTs whose set  of states  span finite
dimensional vector spaces.  In this case, and under the
no-restriction  assumption, one  can define  the linear
dimension   of   a   system    $A$   as   $\ell(A)   :=
\dim{\mathcal{S}_{\mathbb{R}}           (A)}          =
\dim{\mathcal{E}_{\mathbb{R}}(A)}$.

We can now introduce the notion of channels.  A channel
on  system $A$  (for  simplicity  we consider  channels
having the  same input and  output system) is  a linear
map $T$ from $\mathcal{S}(A)$ to itself
\begin{equation*}
  T: \omega\in \mathcal{S}(A)\mapsto T(\omega) \in \mathcal{S}(A).
\end{equation*}
and it is graphically represented as follows
\begin{equation*}
  \begin{aligned}
    \Qcircuit @C=1em @R=.7em @! R { \prepareC{\omega} &
      \qw   \poloFantasmaCn   A   &  \gate   {T}&   \qw
      \poloFantasmaCn A &\qw }
  \end{aligned}\:.
\end{equation*}
Moreover, for  any system  $C$, the map  $T\otimes I_C$
(with $I_C$ denoting the identity channel on the system
$C$,     namely    $I_C(\rho)     =\rho$    for     any
$\rho\in\mathcal{S}(C)$) must  correspond to  a channel
from $\mathcal{S}(A\otimes  C)$ to itself.   This means
that when  $T$ is  applied to a  subsystem of  a larger
bipartite  one, it  still  maps  bipartite states  into
bipartite  states.  The  last  condition resembles  the
condition of complete positivity of quantum channels.

In what  follows, we  will consider  a special  kind of
channels, namely \emph{reversible  channels}. These are
defined  as follows:  a channel  $U$ on  system $A$  is
reversible if and only  if there exists another channel
$U^{-1}$  such that  $UU^{-1}=U^{-1}U=I_A$.  We  denote
the set of all reversible channels as $\mathcal{U}(A)$.
Notice  that in  quantum theory  the set  of reversible
channels   coincides   with    the   set   of   unitary
transformations.

Another common assumption in the GPTs framework is that
of causality (the details  and the consequences of this
assumption on  the structure of a  probabilistic theory
can  be  found  in  Ref.~\cite{CDP11}).   A  theory  is
``causal'' if the choice of future measurement settings
does not  influence the outcome probability  of present
experiments.   Mathematically, the  causality condition
is equivalent to the fact that, for every system, there
exists only one deterministic  effect, denoted by $\bar
e$,   which  is   the  effect   that  has   conditional
probability equal to one on any state~\cite{CDP11}.

\subsection{Characterization of $\cc{m}{n}{d}$}
\label{sec:classcorr}

In this paper we  denote by $\cc{m}{n}{d}$ the polytope
of  all  $m$-input/$n$-output  conditional  probability
distributions $p_{y|x}$  that can be obtained  by means
of   one    $d$-dimensional   classical    or   quantum
system~\cite{FW15}.   Its  extremal  points  are  those
$p_{y|x}$ which are non-zero  for at most $d$ different
values of $y$, and their  non-zero entries are equal to
one.

Let     us     denote     with     $\binom{n}{k}     :=
\frac{n!}{k!(n-k)!}$ the binomial  coefficient and with
${m\brace         k}          :=         \sum_{j=0}^{k}
\frac{1}{k!}(-1)^{k-j}\binom{k}{j}  j^m$  the  Stirling
number  of  the  second   kind,  i.e.   the  number  of
partitions of  a set of  $m$ elements in  $k$ non-empty
classes.  Then the following result holds.

\begin{lmm}
  The number $V$ of vertices of $\cc{m}{n}{d}$ is equal
  to
  \begin{align*}
    V = \sum_{k=1}^{d} k! \binom{n}{k} {m\brace k}.
  \end{align*}
\end{lmm}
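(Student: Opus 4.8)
The strategy is to count the vertices of $\cc{m}{n}{d}$ directly from their combinatorial description given just above the statement: a vertex is a conditional distribution $p_{y|x}$ whose columns (indexed by $x$) are $0/1$-valued, with the normalization forcing each column to have exactly one entry equal to $1$, and with the additional constraint that the set of output values $y$ actually used across all columns has cardinality at most $d$. So a vertex is precisely a function $f:\mathcal{X}\to\mathcal{Y}$ whose image has size at most $d$, where $f(x)$ is the unique $y$ with $p_{y|x}=1$. Hence $V$ equals the number of functions from an $m$-set to an $n$-set whose range has size at most $d$.

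First I would fix the exact size $k$ of the range, $1\le k\le d$, and count functions $f:\mathcal{X}\to\mathcal{Y}$ with $|\mathrm{im}(f)|=k$. Such a function is specified by (i) choosing the $k$-element image set $R\subseteq\mathcal{Y}$, which can be done in $\binom{n}{k}$ ways, and (ii) choosing a surjection from the $m$-set onto $R$. The number of surjections from an $m$-set onto a fixed $k$-set is $k!\,{m\brace k}$: a surjection corresponds to an ordered partition of $\mathcal{X}$ into $k$ nonempty blocks (the fibers), i.e. an unordered partition counted by the Stirling number ${m\brace k}$, together with a bijection between the $k$ blocks and the $k$ elements of $R$, contributing the factor $k!$. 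Multiplying, the number of vertices with image size exactly $k$ is $k!\binom{n}{k}{m\brace k}$.

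Summing over the admissible values $k=1,\dots,d$ then gives
\begin{align*}
  V = \sum_{k=1}^{d} k!\binom{n}{k}{m\brace k},
\end{align*}
as claimed. The only genuine content is the verification that each such $0/1$ matrix really is a vertex (extreme point) of the polytope and that these are all of them — but this is exactly the characterization of the extreme points of $\cc{m}{n}{d}$ recalled immediately before the lemma (non-zero in at most $d$ rows, all non-zero entries equal to one), which I am entitled to assume; after that the argument is the standard bijective count of surjections.

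I expect the main (minor) obstacle to be bookkeeping about the $d\ge n$ regime: when $d\ge n$ every function is allowed, and one should check the formula reduces to $n^m$, which follows from the identity $\sum_{k=1}^{n}k!\binom{n}{k}{m\brace k}=n^m$ (partitioning the functions $\mathcal{X}\to\mathcal{Y}$ by image size). I would note this as a consistency check rather than belabor it. A secondary point worth a sentence is the degenerate case $m=0$ or $n=0$, which is vacuous and can be dismissed. Otherwise the proof is a clean one-paragraph counting argument once the vertex description is in hand.
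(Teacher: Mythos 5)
Your proof is correct and follows essentially the same route as the paper's: both fix the number $k\le d$ of output symbols actually used, choose them in $\binom{n}{k}$ ways, and count the $k!{m\brace k}$ surjections from the $m$ inputs onto that $k$-set before summing over $k$. The only difference is presentational (functions versus stochastic matrices), plus your welcome sanity checks for $d\ge n$ and degenerate cases, which the paper omits.
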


\begin{proof}
  The  statement   follows  by  a  a   simple  counting
  argument.  Let us arrange the numbers $p_{y|x}$ in an
  $m \times n$ stochastic  matrix, where $x$ labels the
  rows and  $y$ labels the columns.   One first chooses
  which are  the $k  \leq d$  non-null columns:  as the
  matrix has a total of $n$ possible columns, there are
  $\binom{n}{k}$ ways  to do so.  Then,  since each row
  consists only  of zeros and  a single one,  there are
  exactly $k!  {m \brace k}$ possible arrangements. The
  factorial  $k!$ comes  from  the fact  that here  the
  order  of  the  partition   is  relevant,  while  the
  definition of the Stirling  number of the second kind
  does not take this into account.
\end{proof}

\subsection{Witnessing      violations      of      the
  no-hypersignaling principle}
\label{sec:violations}

A measurement $M$ is a  family of effects summing up to
the unit effect, namely the effect that has conditional
probability one  given any  state. A convenient  way to
represent  any   measurement  is  the   following.   By
rescaling each  effect by  a positive  coefficient, one
can have  all the effects of  the theory to lie  in the
hyperplane  that  contains  the   unit  effect  and  is
orthogonal to the unit  effect.  Although such rescaled
effects can  be out of  the truncated cone  of effects,
they  are  all  linear  combinations  of  effects  with
positive   coefficients  (hence,   they   lie  in   the
non-truncated cone of effects).

For the aforementioned reason,  and with a slight abuse
of  notation, we  will  refer to  them as  ``normalized
effects''.   In  other  words, each  normalized  effect
identifies the class of equivalence of all effects that
lie  on  the same  ray,  obtained  by intersecting  the
truncated   cone   with   the   hyperplane.    Extremal
normalized effects are normalized effects that also lie
on  an  extremal  ray  of the  cone  of  effects.   For
example,   in  quantum   theory  this   corresponds  to
rescaling any effect so that  they all have trace equal
to the dimension  of the Hilbert space  (the same trace
of the unit effect).

This allows one to represent any measurement $M$ simply
as  a probability  distribution  $p_y$ over  normalized
effects $\{e_y\}$, with the  condition that $\sum_y p_y
e_y  = \bar{e}$,  where $\bar{e}$  is the  unit effect.
The effects of such a measurement are just given by the
normalized  effects   weighted  by   the  corresponding
probabilities, that is $p_y e_y$.  We then say that the
family     of      normalized     effects     $\{e_y\}$
\textit{supports} the measurement $M$.  For example, in
quantum theory, a measurement is a POVM.

The above representation turns out to be very useful in
our analysis  for the following reason.   By linearity,
when  looking for  violations of  the no-hypersignaling
principle as  given by Eq.~\eqref{eq:iff},  it suffices
to consider  extremal families  of states  and extremal
measurements.     While   the    former   are    simply
characterized  as  families  of  extremal  states,  the
characterization  of  extremal   measurements  is  more
complicated.     Generally,    there    are    extremal
measurements  whose supporting  normalized effects  are
not  all extremal  (this  is also  a  known feature  of
quantum theory~\cite{Par99, DLP05}).

However, as shown below, when looking for violations of
the   no-hypersignaling  principle,   it  suffices   to
consider  extremal measurements  supported by  extremal
normalized effects.  This fact, together with the above
representation,  allows  us  to write  any  measurement
potentially  violating the  no-hypersignaling principle
simply  as  a  probability distribution  over  extremal
normalized effects, which are finite in number and thus
easily  characterizable.   This,  in turn,  provides  a
efficient  way  to check  whether  a  GPT containing  a
finite  number   of  extremal  normalized   effects  is
hypersignaling or not.

We start by  showing that, for the problem  at hand, it
suffices   to  consider   extremal  measurements   with
extremal normalized effects:
\begin{thm}
  \label{thm:exteffects}
  If a composite system $S = \otimes_kS_k$ violates the
  no-hypersignaling principle, then  a violation occurs
  for   some  measurement   with  extremal   normalized
  effects.
\end{thm}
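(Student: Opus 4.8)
The plan is to start from any witnessed violation — a correlation $p\in\cc{m}{n}{\bar S}$ and a game $g$ with $g^T\cdot p>\max_{q\in\cc{m}{n}{K}}g^T\cdot q$, as in Eq.~\eqref{eq:iff} — and to modify the decoding measurement in two stages: first remove the shared randomness, then refine the measurement so that each of its effects lies on an extremal ray of the effect cone. Throughout, I keep the left-hand side of Eq.~\eqref{eq:iff} unchanged and make sure the right-hand side does not exceed the original one, so that a violation is preserved.

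First I would decompose $p=\sum_\lambda\Pr(\lambda)\,p^{(\lambda)}$, where $p^{(\lambda)}_{y|x}=\langle E^{(\lambda)}_y|\Omega^{(\lambda)}_x\rangle$ is the correlation produced by the fixed encoding $\{\Omega^{(\lambda)}_x\}$ and the fixed measurement $\{E^{(\lambda)}_y\}$ labelled by the value $\lambda$ of the shared random variable. Each $p^{(\lambda)}$ belongs to $\cc{m}{n}{\bar S}$, being the special (deterministic) case, and since $A^T\cdot B$ is linear in $B$ we have $g^T\cdot p=\sum_\lambda\Pr(\lambda)\,g^T\cdot p^{(\lambda)}\le\max_\lambda g^T\cdot p^{(\lambda)}$. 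Hence for some $\lambda^\star$ we get $g^T\cdot p^{(\lambda^\star)}\ge g^T\cdot p>\max_{q\in\cc{m}{n}{K}}g^T\cdot q$, so from now on the violation may be taken to be realized by a single encoding $\{\Omega_x\}$ and a single measurement $\{E_y\}$.

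Next I would refine $\{E_y\}$: each effect admits a finite conic decomposition $E_y=\sum_i c_{y,i}f_i$ with $c_{y,i}\ge0$ and each $f_i$ an extremal normalized effect, and since $\sum_y E_y=\bar e$ the family $\{E_{(y,i)}:=c_{y,i}f_i\}$, indexed by the pairs $(y,i)$ with $c_{y,i}\neq0$, is again a measurement, supported by extremal normalized effects by construction. It realizes the correlation $p'_{(y,i)|x}=c_{y,i}\langle f_i|\Omega_x\rangle\in\cc{m}{n'}{\bar S}$, whose coarse-graining over $i$ recovers the previous one, $\sum_i p'_{(y,i)|x}=\langle E_y|\Omega_x\rangle$. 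Defining the game $g'$ on $m$ inputs and $n'$ outputs by $g'_{x,(y,i)}:=g_{x,y}$, one has $g'^T\cdot p'=\sum_{x,y}g_{x,y}\sum_i p'_{(y,i)|x}=g^T\cdot p$, so the left-hand side of Eq.~\eqref{eq:iff} is preserved.

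Finally I would bound the right-hand side, showing $\max_{q'\in\cc{m}{n'}{K}}g'^T\cdot q'\le\max_{q\in\cc{m}{n}{K}}g^T\cdot q$; this is the step I expect to require the only real care. Coarse-graining the output, $q'\mapsto\tilde q$ with $\tilde q_{y|x}=\sum_i q'_{(y,i)|x}$, maps $\cc{m}{n'}{K}$ into $\cc{m}{n}{K}$, because post-composing a $K$-dimensional classical channel with the deterministic relabelling $(y,i)\mapsto y$ is again a $K$-dimensional classical channel — equivalently, by the vertex description recalled in Sec.~\ref{sec:classcorr}, merging columns of a $0/1$ matrix with at most $K$ non-null columns yields a $0/1$ matrix with at most $K$ non-null columns. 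Since $g'^T\cdot q'=g^T\cdot\tilde q$, the inequality follows, and combining it with the previous display and Eq.~\eqref{eq:iff} shows that $p'$ and $g'$ witness a violation realized by $\{E_{(y,i)}\}$, whose supporting normalized effects are extremal. The only residual point is that $E_y$ must admit a \emph{finite} decomposition into extremal normalized effects, which holds in finite dimension (by Carath\'eodory's theorem for cones, and trivially when, as in the theories considered here, there are only finitely many extremal normalized effects).
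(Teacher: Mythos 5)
Your proposal is correct and follows essentially the same route as the paper's own proof: decompose each effect of the violating measurement into extremal normalized effects, duplicate the corresponding columns of the game $g$, and check via output coarse-graining that the classical threshold $\max_{q\in\cc{m}{n}{K}}g^T\cdot q$ is not increased. Your extra preliminary step of stripping off the shared randomness, and your explicit vertex-merging argument for the threshold bound, only make explicit what the paper leaves implicit.
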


\begin{proof}
  Suppose  that  a  certain  payoff $g^T  \cdot  p$  is
  obtained  by means  of  a measurement  $M$ whose  $n$
  normalized  effects  are  not  all  extremal.   Since
  normalized effects  of $S$  are $\ell(S)$-dimensional
  real vectors with  a linear normalization constraint,
  by  Caratheodory's  theorem,  each  of  them  can  be
  decomposed  as  the  convex combination  of  at  most
  $\ell(S)$ extremal normalized effects.
  
  Consider hence  a refinement of measurement  $M$ into
  another measurement  $M'$ with $n' :=  \ell(S) \times
  n$ extremal normalized  effects.  Correspondingly, we
  also expand game $g$ into another game $g'$, which is
  obtained  from $g$  by writing  $\ell(S)$ times  each
  column. Trivially, by construction, the payoff of $M$
  for game  $g$ is the same  as the payoff of  $M'$ for
  game $g'$.

  Let us now show that games $g$ and $g'$ have the same
  classical payoff, namely the same ``no-hypersignaling
  threshold'', i.e.
  \begin{align*}
    \max_{p \in  \cc{m}{n}{d}} g^T  \cdot p  = \max_{p'
      \in \cc{m}{n'}{d}} g'^T \cdot p' \; .
  \end{align*}
  This is proved since the right hand side is obviously
  not smaller than the left hand side (as game $g$ is a
  ``sub-game''  of game  $g'$).   Conversely, the  left
  hand side is not smaller than the right hand side, as
  a consequence  of the  fact that  any coarse-graining
  identifying   all   the  outcomes   associated   with
  identical columns of $g'$ transforms any $n'$-outcome
  measurement attaining  some payoff  for $g'$  into an
  $n$-outcome  measurement  attaining  the  {\em  same}
  payoff for $g$.

  The above  arguments hence show that,  if a violation
  of the no-hypersignaling threshold is observed with a
  measurement with non-extremal normalized effects, the
  same   violation  can   be  observed   also  with   a
  measurement  supported  only by  extremal  normalized
  effects.
\end{proof}

Let us now provide  a full closed-form characterization
of  the  set  of extremal  measurements  with  extremal
normalized   effects    in   any    given   generalized
probabilistic theory:
\begin{thm}
  \label{thm:lineffects}
  Any  measurement  $M =  \{  p_y  >  0, e_y  \}$  with
  extremal normalized  effects $\{ e_y \}$  is extremal
  if and only if $\{ e_y \}$ are linearly independent.
\end{thm}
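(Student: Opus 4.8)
The plan is to translate ``extremality of the measurement $M$'' into an algebraic statement about the normalized effects $\{e_y\}$ and then recognise that statement as linear independence. Recall that a measurement is a family $\{m_y\}$ of effects with $\sum_y m_y = \bar e$, so in the normalized representation $m_y = p_y e_y$ with $p_y > 0$; since for any $y$ one has $\bar e - m_y = \sum_{y'\neq y} m_{y'}$, which automatically lies in the (non-truncated) cone of effects $C$, a family $\{m_y\}$ is a valid measurement precisely when each $m_y \in C$ and $\sum_y m_y = \bar e$. Convex combinations of measurements (with the given outcome set) are taken outcome by outcome, so $M$ fails to be extremal if and only if there are measurements $M^{(1)} \neq M^{(2)}$ and $\lambda \in (0,1)$ with $m_y = \lambda M^{(1)}_y + (1-\lambda) M^{(2)}_y$ for every $y$.

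The key geometric fact is that, because each $e_y$ spans an \emph{extremal} ray of $C$, any two elements of $C$ whose convex combination equals $m_y = p_y e_y$ must themselves lie on that ray. Concretely: if $m_y = \lambda a_y + (1-\lambda) b_y$ with $a_y, b_y \in C$ and $0 < \lambda < 1$, then since $m_y$ lies on an extremal ray and a face of a convex cone is closed under taking summands, both $a_y$ and $b_y$ lie on $\mathbb{R}_{\ge 0}\, e_y$, whence $a_y - b_y = c_y\, e_y$ for some scalar $c_y$. Applying this to a putative nontrivial decomposition of $M$ gives $M^{(1)}_y - M^{(2)}_y = c_y e_y$ for every $y$; summing the identities $\sum_y M^{(1)}_y = \sum_y M^{(2)}_y = \bar e$ yields $\sum_y c_y e_y = 0$, and the $c_y$ are not all zero because $M^{(1)} \neq M^{(2)}$ and each $e_y \neq 0$. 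Hence a non-extremal $M$ forces $\{e_y\}$ to be linearly dependent.

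For the converse I would run this construction in reverse. Given a nontrivial relation $\sum_y c_y e_y = 0$, set $M^{(\pm)}_y := (p_y \pm \epsilon c_y)\, e_y$. Choosing $\epsilon > 0$ small enough that $p_y \pm \epsilon c_y > 0$ for each of the finitely many outcomes (possible since all $p_y > 0$) guarantees $M^{(\pm)}_y \in C$, and $\sum_y M^{(\pm)}_y = \bar e \pm \epsilon \sum_y c_y e_y = \bar e$, so $M^{(+)}$ and $M^{(-)}$ are genuine measurements; they are distinct because some $c_y \neq 0$, and $M = \frac{1}{2} M^{(+)} + \frac{1}{2} M^{(-)}$. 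Thus linear dependence of $\{e_y\}$ makes $M$ non-extremal, which completes the equivalence.

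I expect the only real subtlety to be the geometric step — making precise that ``$m_y$ lies on an extremal ray'' forces both summands onto that same ray — which follows directly from the definition of an extremal ray together with the fact that faces of a convex cone are closed under taking summands; everything else is bookkeeping (and, unlike the proof of Theorem~\ref{thm:exteffects}, no Caratheodory-type argument is needed here). It is worth remarking in passing that the one-dimensional ``support'' of $m_y = p_y e_y$ plays exactly the role that operator supports play in the analogous characterisation of extremal quantum POVMs of Refs.~\cite{Par99, DLP05}, so that the present argument is precisely the generalized-probabilistic-theory counterpart of the quantum one.
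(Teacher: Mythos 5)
Your proof is correct, and it parts ways with the paper's in one of the two directions. For ``linearly independent $\Rightarrow$ extremal'' your argument is essentially the paper's: both exploit the fact that each $e_y$ spans an extremal ray of the cone of effects to force any convex splitting of $M$ to act ray by ray, and then linear independence makes the decomposition of $\bar e$ over $\{e_y\}$ unique (you phrase this contrapositively, extracting a nontrivial relation $\sum_y c_y e_y = 0$ from a nontrivial splitting, which amounts to the same thing). For the converse the routes genuinely differ: the paper invokes Caratheodory's theorem to produce a second probability vector $p'$ with $|\supp(p')| \le \dim\spn\{e_y\}$ and $\sum_y p'_y e_y = \bar e$, and then mixes $p'$ with $p''_y = (1-\lambda)^{-1}(p_y - \lambda p'_y)$; you instead perturb symmetrically along a kernel direction, $p \mapsto p \pm \epsilon c$ with $\sum_y c_y e_y = 0$. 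Your perturbation is the more standard argument --- it is exactly the one used to characterize extremal quantum POVMs in Refs.~\cite{Par99,DLP05} --- and is arguably cleaner, needing only the existence of a nontrivial linear relation together with strict positivity of $p$; the paper's Caratheodory route has the side benefit of exhibiting a decomposition into measurements with strictly fewer outcomes, but both are valid. Two small points are worth keeping explicit in your write-up: the observation that $\bar e - m_y \in C$ is automatic from $\sum_{y'} m_{y'} = \bar e$, since that is what certifies that your perturbed families $M^{(\pm)}$ are bona fide measurements; and the fact that $\lambda M^{(1)}_y$ and $(1-\lambda)M^{(2)}_y$ are the two cone elements to which the extremal-ray property is applied, so that the coefficients $c_y$ are well defined even when one of the components vanishes at some outcome.
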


\begin{proof}
  Let us  first prove the  ``only if'' part. By  way of
  contradiction, let us assume there exists an extremal
  measurement $\{ p_y,  e_y \}$ with $p >  0$ such that
  $\{  e_y \}$  are not  linearly independent,  i.e. $|
  \supp(p) | > \dim\spn(\{ e_y \})$.  Since $\{ e_y \}$
  are normalized, they belong  to an affine subspace of
  dimension $\dim\spn(\{ e_y \})  - 1$.  Thus, applying
  Caratheodory's  theorem, the  unit effect  $\bar{e}$,
  which we know to belong  to $\spn(\{ e_y \})$, can be
  decomposed in  terms of a  subset of the $\{  e_y \}$
  with cardinality  $\dim\spn(\{ e_y \})$,  i.e.  there
  exists a  probability $p'_y$ with $|  \supp(p') | \le
  \dim\spn(\{  e_y \})$  such that  $\sum_y p'_y  e_y =
  \bar{e}$.  By  taking $\lambda  > 0$  such that  $p -
  \lambda  p' \ge  0$ (such  a $\lambda$  always exists
  since $p > 0$) and  $p''_y := (1-\lambda)^{-1} (p_y -
  \lambda p'_y)$, it immediately  follows that also $\{
  p''_y, e_y \}$  is a measurement.  Then  $\{ p_y, e_y
  \}$ can  be decomposed as  $\lambda \{p'_y, e_y  \} +
  (1-\lambda)  \{p''_y,  e_y  \}$,   i.e.   it  is  not
  extremal, thus leading to a contradiction.

  Let us now prove the  ``if'' part.  Since $\{ e_y \}$
  are extremal,  they cannot be further  decomposed, so
  any  convex decomposition  of  $M$ would  necessarily
  involve subsets of $\{ e_y \}$. Since $\{ e_y \}$ are
  linearly independent, the  decomposition of $\bar{e}$
  is unique,  and since $p >  0$ any subset of  $\{ e_y
  \}$ cannot be a measurement.  Therefore the statement
  follows.
\end{proof}

As       an        immediate       consequence       of
Theorems~\ref{thm:exteffects} and~\ref{thm:lineffects},
one has the following:
\begin{cor}
  \label{cor:counterexample}
  If a composite system $S = \otimes_kS_k$ violates the
  no-hypersignaling principle, then  a violation occurs
  for  some  measurement  with $n$  extremal,  linearly
  independent normalized effects, with
  \begin{align*}
    \prod_k\kappa(S_k) < n \le \ell(\otimes_k S_k).
  \end{align*}
\end{cor}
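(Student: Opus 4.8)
The plan is to chain Theorems~\ref{thm:exteffects} and~\ref{thm:lineffects} so as to reduce an arbitrary violation to one realized by a single extremal measurement supported on extremal normalized effects, and then to read off the two bounds on $n$ from an elementary dimension count and from the explicit description of the polytope $\cc{m}{n}{d}$.

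First I would apply Theorem~\ref{thm:exteffects}: since $S=\otimes_kS_k$ violates the no-hypersignaling principle, a violation already occurs for some measurement whose normalized effects are all extremal, together with some family of states, some game $g$, and some input size $m$. Next I would further decompose that measurement into a convex combination of extremal measurements; with $K:=\prod_k\kappa(S_k)$, the no-hypersignaling threshold $\max_{q\in\cc{m}{n}{K}}g^{T}\cdot q$ depends only on $g$, $m$, the number of outcomes and $K$, and is thus unaffected, whereas the payoff $g^{T}\cdot p$ on the produced correlation $p$ is affine in the measurement, so at least one extremal component still beats the threshold. Discarding its zero-weight outcomes — together with the matching columns of $g$, which can only lower the threshold, since a conditional distribution on fewer outcomes extends by zero columns to one on the original outcomes with the same payoff — leaves an extremal measurement supported on a subset of those extremal normalized effects. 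Theorem~\ref{thm:lineffects} then guarantees that its supporting normalized effects, say $n$ of them, are linearly independent.

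It remains to bound this $n$. For the upper bound, the $n$ supporting normalized effects are linearly independent vectors of $\mathcal{E}_{\mathbb{R}}(\otimes_kS_k)$, whose dimension is by definition $\ell(\otimes_kS_k)$, whence $n\le\ell(\otimes_kS_k)$. For the lower bound I would use the description of the vertices of $\cc{m}{n}{d}$ recalled in Section~\ref{sec:classcorr}: they are the $\{0,1\}$-valued stochastic matrices with at most $d$ non-null columns. When $n\le K$ the constraint ``at most $K$ non-null columns'' is vacuous, so the vertices of $\cc{m}{n}{K}$ are \emph{all} deterministic input-output maps, and hence $\cc{m}{n}{K}$ is the full set of $m$-input/$n$-output conditional distributions. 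As the violating correlation $p$ is itself such a distribution, this would force $p\in\cc{m}{n}{K}$, contradicting $g^{T}\cdot p>\max_{q\in\cc{m}{n}{K}}g^{T}\cdot q$. Therefore $n>K$, and combining the two bounds gives $\prod_k\kappa(S_k)<n\le\ell(\otimes_kS_k)$.

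I expect the only genuinely non-routine point to be this last one: recognizing that a violation is possible only once $n$ exceeds $K$, precisely because below that value $\cc{m}{n}{K}$ collapses to the trivial polytope of all conditional distributions and so cannot be strictly separated from anything. The two reductions and the dimension count for the upper bound are straightforward once Theorems~\ref{thm:exteffects} and~\ref{thm:lineffects} and the counting lemma of Section~\ref{sec:classcorr} are in hand.
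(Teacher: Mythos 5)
Your proposal is correct and follows essentially the same route as the paper's own (very terse) proof: reduce via Theorem~\ref{thm:exteffects} to extremal normalized effects, invoke Theorem~\ref{thm:lineffects} for linear independence, get the upper bound from the dimension of $\mathcal{E}_{\mathbb{R}}(\otimes_kS_k)$, and get the lower bound from the observation that $\cc{m}{n}{K}$ is the full set of conditional distributions when $n\le K$. You merely spell out the intermediate extremality reduction (decomposing into extremal measurements and using affinity of the payoff) that the paper dismisses as ``immediate,'' which is a welcome, not a divergent, addition.
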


\begin{proof}
  Sufficiency of  measurements with  extremal, linearly
  independent  normalized  effects immediately  follows
  from                    Theorems~\ref{thm:exteffects}
  and~\ref{thm:lineffects}.    The   first   inequality
  immediately   follows   from   the  fact   that   any
  $m$-inputs/$n$-outputs  correlation $p$  with $n  \le
  \prod_k\kappa(S_k) =: K$ belongs to $\cc{m}{n}{K}$ by
  definition. The second inequality immediately follows
  from   the  condition   of  linear   independence  of
  normalized effects.
\end{proof}

For  any composite  system  $S =  \otimes_kS_k$ with  a
finite   number   of   extremal   normalized   effects,
Corollary~\ref{cor:counterexample}      provides     an
efficient    way   to    find    violations   of    the
no-hypersignaling principle.   For any set $\{  E_y \}$
with     $n$    normalized     effects    such     that
$\prod_k\kappa(S_k) < n  \le \ell(\otimes_kS_k)$ (these
sets are finite in number), one proceeds as follows:

\begin{enumerate}
\item Check if  $\{ E_y \}$ supports  a measurement.  A
  set $\{ E_y \}$ supports a measurement if and only if
  the following linear program is feasible
  \begin{align*}
    \min_{\substack{p \\ \sum_y p_y E_y = \bar{E} \\ p > 0}} 0,
  \end{align*}
  (notice that the objective function is irrelevant, as
  one is only interested in feasibility).
\item Check  if $\{  E_y \}$ are  linearly independent.
  Normalized   effects  $\{   E_y   \}$  are   linearly
  independent if  and only if  the matrix with  $\{ E_y
  \}$ as columns is full rank.
\end{enumerate}
Then one  has that  $S$ violates  the no-hypersignaling
principle if and only if  a violation occurs for one of
the sets of normalized effects  that passed both of the
two above checks.

Finally,  notice that  for  a  theory satisfying  local
tomography,  the  statement  of the  Corollary  further
simplifies:  the  number   $n$  of  extremal,  linearly
independent normalized effects is bounded as follows
\begin{align*}
  \prod_k\kappa(S_k) < n \le \prod_k\ell(S_k).
\end{align*}

\subsection{Construction of a class of toy models}
\label{sec:models}

Here we restrict to the  simple case of theories with a
single  ``type'' of  elementary system  $S$. We  assume
that the  system $S$ has linear  dimension $\ell(S)=3$,
namely  its   states  $\omega$  and  effects   $e$  are
described     by      vectors     in     $\mathbb{R}^3$
($\mathcal{S}=\mathcal{E}=\mathbb{R}^3$).     We    now
specify    the    convex   sets    $\mathcal{S}$    and
$\mathcal{E}$.   The  system  $S$ has  only  four  pure
(extremal) states,
\begin{align*}
  \omega_0 =\begin{pmatrix} 1\\0\\1
\end{pmatrix},\:
  \omega_1 =\begin{pmatrix}
0\\1\\1
\end{pmatrix},\:
  \omega_2 =\begin{pmatrix}
-1\\0\\1
\end{pmatrix},\:
  \omega_3 =\begin{pmatrix}
0\\-1\\1
\end{pmatrix}.
\end{align*}
The convex  set of states is  geometrically represented
by  a square  (see the  square  in the  plane $z=1$  in
Fig.~\ref{fig:elementary-system}) whose finite group of
symmetries  (the dihedral  group of  order eight  $D_8$
containing   four  rotations   and  four   reflections)
coincides with  the set of reversible  channels for the
system $S$, explicitly given by
\begin{equation}\label{eq:single-system-unitaries}
\begin{aligned}
  \mathcal{U}(S)=\{U_k^s: k=0,\ldots,3, s=\pm\}\\
  U_k^s =
  \begin{pmatrix}
    \cos \frac{\pi k}2 & -s \sin \frac{\pi k}2 & 0 \\
    \sin \frac{\pi k}2 & s \cos \frac{\pi k}2 & 0 \\
    0 & 0 & 1
  \end{pmatrix}.
\end{aligned}
\end{equation}
The matrices $U_k^{+}$ and $U_k^{-}$ represent the four
rotations and the four reflections respectively.

Assuming that  the probability associated to  an effect
on a state  is given by the trace  rule, we immediately
characterize  the  set  of  extremal  effects  for  the
elementary systems, namely the  set of vectors $e$ such
that $\Tr[e^T  \omega] \ge  0$ for any  state $\omega$.
This  leads  to  the   truncated  cone  of  effects  in
Fig.~\ref{fig:elementary-system},     with     extremal
normalized effects given by
\begin{align*}
  e_0 =\begin{pmatrix}
    1\\1\\1
  \end{pmatrix},\:
  e_1 =\begin{pmatrix}
    -1\\1\\1
  \end{pmatrix},\:
  e_2 =\begin{pmatrix}
    -1\\-1\\1
  \end{pmatrix},\:
  e_3 =\begin{pmatrix}
    1\\-1\\1
  \end{pmatrix}.
\end{align*}
Notice that the condition  $\Tr[e^T \omega] \leq 1$ for
any state $\omega$ and effect $e$ implies that extremal
effects are  obtained by  dividing by $2$  the extremal
normalized effects.  It is  immediate to check that the
deterministic effect, namely  the effect $\bar{e}$ such
that   $\Tr[\bar   e^T   \omega]=1$   for   any   state
$\omega\in\mathcal{S}$, must  be the vector  $\bar{e} =
(0, 0, 1)^T$.

Notice that extremal states (resp., normalized effects)
can be written in terms  of an arbitrary extremal point
via the reversible channels of the elementary system in
Eq.~\eqref{eq:single-system-unitaries}, for  example we
can write $\omega_x =  U_x^+ \omega_0^T$ (resp., $e_y =
U_y^+ e_0^T$).

We now  consider the  bipartite system $S\otimes  S$ of
linear  dimension $\ell(S\otimes  S)=9$  and thus  with
states $\Omega$ and  normalized effects $E$ represented
by  vectors  in $\mathbb{R}^9$.   The  goal  is now  to
derive  the self-consistent  bipartite GPTs  compatible
with the above elementary system $S$.

\begin{figure}[htb]
  \begin{overpic}[width=.7\columnwidth]{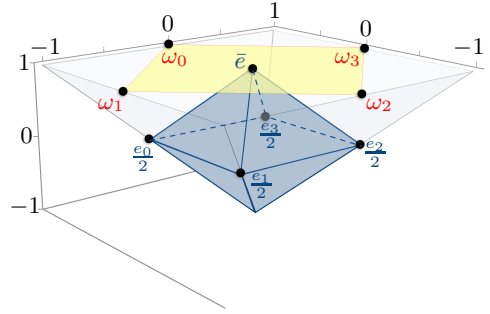}
    \put (15, 44) {\color{red}$\omega_1$}
    \put (67, 54) {\color{red}$\omega_3$}
    \put (29, 54) {\color{red}$\omega_0$}
    \put (74, 44) {\color{red}$\omega_2$}
    \put (48, 26) {\color{blue1}$\frac{e_1}2$}
    \put (73, 33) {\color{blue1}$\frac{e_2}2$}
    \put (22, 32) {\color{blue1}$\frac{e_0}2$}
    \put (50, 37) {\color{blue1}$\frac{e_3}2$}
    \put (45, 53) {\color{blue1}$\bar e$}
    \put (72.5, 60) {\color{black}$0$}
    \put (52.5, 64) {\color{black}$1$}
    \put (93, 56) {\color{black}$-1$}
    \put (-2.5, 52.5) {\color{black}$1$}
    \put (-2, 36.5) {\color{black}$0$}
    \put (-4.5, 21) {\color{black}$-1$}
    \put (0, 56) {\color{black}$-1$}
    \put (29, 61) {\color{black}$0$}
\end{overpic}
\caption{{\bf  Elementary system  for  a  class of  toy
    models.}   This  picture   depicts  the   ``squit''
  elementary  system  often considered  in  generalized
  probabilistic theories (in analogy to the ``bit'' and
  the  ``qubit''   which  are  elementary   systems  of
  classical  and  quantum  theory  respectively).   The
  system  is  fully specified  by  its  sets of  states
  (preparations)   and   effects  (observations)   here
  represented as vectors in $\mathbb{R}^3$.  The convex
  set of normalized states is represented by the yellow
  square at  the top, while  the convex set  of effects
  corresponds to the truncated blue cone.}
  \label{fig:elementary-system}
\end{figure}

It is  a convenient standard practice  to represent the
states  and  effects  of   two  elementary  systems  as
$3\times  3$ real  matrices rather  than as  vectors in
$\mathbb{R}^9$.   Any   bipartite  extension  naturally
includes  the  $16$   factorized  extremal  states  and
normalized effects given by
\begin{align*}
  \Omega_{4i+j} := \omega_i \otimes \omega_j^T,\qquad
  E_{4i+j} := e_i \otimes e_j^T,
\end{align*}
where $i, j \in \{ 0, 1, 2, 3 \}$.

Moreover, one  can introduce  other matrices  that play
the role  of entangled states and  effects.  These must
be compatible  with all  factorized effects  and states
given above. Diagrammatically, any such candidate state
$\Omega$  and normalized  effect $E$  must satisfy  the
following:
\begin{equation*}
\begin{aligned}
  &\begin{aligned} \Qcircuit @C=1em @R=.7em @! R {
      \multiprepareC{1}{\Omega} & \qw\poloFantasmaCn S   & \measureD{e_j} \\
      \pureghost{\Omega}   &  \qw\poloFantasmaCn   S  &
      \measureD{e_{j'}} }
  \end{aligned}\geq 0,\qquad & \forall  j, j' \in \{ 0,
  1, 2, 3 \},
  \\
  \\
  &\begin{aligned} \Qcircuit @C=1em @R=.7em @! R {
      \prepareC{\omega_i} & \poloFantasmaCn S \qw  & \multimeasureD{1}{E} \\
      \prepareC{\omega_{i'}}  & \poloFantasmaCn  S\qw &
      \ghost{E} }
  \end{aligned}\geq 0,\qquad & \forall  i, i' \in \{ 0,
  1, 2, 3\}.
\end{aligned}
\end{equation*}

It  is lengthy  but not  difficult to  verify that  the
matrices
\begin{align*}
  \Omega_{16} & := \frac12
  \begin{pmatrix}
    -1 & 1 & 0 \\
    1 & 1 & 0 \\
    0 & 0 & 2
  \end{pmatrix},
  & \Omega_{17} & := \frac12
  \begin{pmatrix}
    -1 & -1 & 0 \\
    -1 & 1 & 0 \\
    0 & 0 & 2
  \end{pmatrix}, \\
  \Omega_{18} & := \frac12
  \begin{pmatrix}
    1 & -1 & 0 \\
    -1 & -1 & 0 \\
    0 & 0 & 2
  \end{pmatrix},
  & \Omega_{19} &:= \frac12
  \begin{pmatrix}
    1 & 1 & 0 \\
    1 & -1 & 0 \\
    0 & 0 & 2
  \end{pmatrix}, \\
  \Omega_{20} & := \frac12
  \begin{pmatrix}
    -1 & -1 & 0 \\
    1 & -1 & 0 \\
    0 & 0 & 2
  \end{pmatrix},
  & \Omega_{21} & := \frac12
  \begin{pmatrix}
    1 & -1 & 0 \\
    1 & 1 & 0 \\
    0 & 0 & 2
  \end{pmatrix}, \\
  \Omega_{22} & := \frac12
  \begin{pmatrix}
    1 & 1 & 0 \\
    -1 & 1 & 0 \\
    0 & 0 & 2
  \end{pmatrix},
  & \Omega_{23} & := \frac12
  \begin{pmatrix}
    -1 & 1 & 0 \\
    -1 & -1 & 0 \\
    0 & 0 & 2
  \end{pmatrix},
\end{align*}
and 
\begin{align*}
  E_{16} & :=
  \begin{pmatrix}
    -1 & 1 & 0 \\
    1 & 1 & 0 \\
    0 & 0 & 1
  \end{pmatrix}, &
  E_{17} & :=
  \begin{pmatrix}
    -1 & -1 & 0 \\
    -1 & 1 & 0 \\
    0 & 0 & 1
  \end{pmatrix}, \\
  E_{18} & :=
  \begin{pmatrix}
    1 & -1 & 0 \\
    -1 & -1 & 0 \\
    0 & 0 & 1
  \end{pmatrix}, &
  E_{19} & :=
  \begin{pmatrix}
    1 & 1 & 0 \\
    1 & -1 & 0 \\
    0 & 0 & 1
  \end{pmatrix}, \\
  E_{20} & :=
  \begin{pmatrix}
    -1 & 1 & 0 \\
    -1 & -1 & 0 \\
    0 & 0 & 1
  \end{pmatrix}, &
  E_{21} & :=
  \begin{pmatrix}
    1 & 1 & 0 \\
    -1 & 1 & 0 \\
    0 & 0 & 1
  \end{pmatrix}, \\
  E_{22} & :=
  \begin{pmatrix}
    1 & -1 & 0 \\
    1 & 1 & 0 \\
    0 & 0 & 1
  \end{pmatrix}, &
  E_{23} & :=
  \begin{pmatrix}
    -1 & -1 & 0 \\
    1 & -1 & 0 \\
    0 & 0 & 1
  \end{pmatrix},
\end{align*}
satisfy  the  above  requirements, i.e.   they  satisfy
$\Tr[E_j^T \Omega_i] \ge 0$ for any $i \in [0, 15]$ and
$j \in [16,  23]$, and for any $i \in  [16, 23]$ and $j
\in [0, 15]$.  It is  also lengthy but not difficult to
verify  that  no  other  bipartite  extremal  state  or
normalized  effect  is  allowed.   It  is  possible  to
express in terms  of the reversible channels given
in Eq.\eqref{eq:single-system-unitaries}, namely
\begin{align*}
  \Omega_{16+k} & = \Omega_{16} {U_k^+}^T, & \Omega_{20+k}
  & = \Omega_{16} {U_k^-}^T, \\
  E_{16+k} & = U_k^+ E_{16}, & E_{20+k} & = U_k^- E_{16}.
\end{align*}
Finally,  the deterministic  effect  for the  bipartite
system is $\bar E=\bar e\otimes \bar e^T$.

In general the consistency of the theory (positivity of
the predicted  probabilities), imposes  restrictions on
the  admissible  entangled  states and  effects.   With
this, we mean that  any well-formed closed circuit must
give  rise   to  non-negative  probabilities.   Let  us
consider now this particular circuit:
\begin{align}
  \label{eq:smartcircuit}
  \begin{aligned}
    \Qcircuit     @C=1em     @R=.7em     @!     R     {
      \multiprepareC{1}{\Omega_x}  & \poloFantasmaCn  S
      \qw & \gate{U^{n_0}} & \qw&\poloFantasmaCn S\qw &
      \multimeasureD{3}{E_y} \\
      \pureghost{\Omega_x}  &  \poloFantasmaCn  S\qw  &
      \gate{U^{n_1}} &\poloFantasmaCn S\qw &
      \multimeasureD{1}{E_y} & \pureghost{E_y} \\
      \multiprepareC{1}{\Omega_x}   &   \poloFantasmaCn
      S\qw &
      \gate{U^{n_2}} &\poloFantasmaCn S\qw & \ghost{E_y} &      \pureghost{E_y} \\
      \pureghost{\Omega_x}  & \poloFantasmaCn  S \qw  &
      \gate{U^{n_3}}  &  \qw  &\poloFantasmaCn  S\qw  &
      \ghost{E_y} }
  \end{aligned} \; ,
\end{align}
where $U := \Omega_x E_y\in\mathcal{U}(S)$, namely
\begin{align*}
  U =
  \begin{aligned}
    \Qcircuit @C=1em @R=.7em @! R {
      \multiprepareC{1}{\Omega_x} & \qw  \\
      \pureghost{\Omega_x} & \multimeasureD{1}{E_y} \\
      & \ghost{E_y} }
  \end{aligned} \;,
\end{align*}
and  $n_i =  0,  1$.  The  requirement  that the  above
circuit  generates  non-negative   probabilities  is  a
necessary condition for any  choice of bipartite states
and bipartite effects to be consistent.  In formula, we
need to check the following inequality:
\begin{align}
  \label{eq:consistency}
  \Tr[ {U^{n_3}}^T  E_y^T U^{n_0}  \Omega_x {U^{n_1}}^T
  E_y U^{n_2} \Omega_x] \ge 0.
\end{align}
By explicit computation for $x \in [16, 23]$ and $y \in
[16,  23]$, the  only pairs  $(x, y)$  that generate  a
non-negative probability  in Eq.~\eqref{eq:consistency}
are the eight  pairs where $x = y$,  and the additional
four combinations  $(20, 22)$, $(22, 20)$,  $(21, 23)$,
and $(23, 21)$.

The  circuit  in Eq.~\eqref{eq:smartcircuit}  alone  is
enough to  leave us with  only four models  (apart from
trivial submodels), having  the following prescriptions
for the sets of  pure states $\{\Omega_i\}$ and effects
$\{E_j\}$:
\begin{enumerate}
\item {\bf  PR model:} All the  24 states $i\in[0,23]$;
  only the 16 factorized effects $j \in [0, 15]$;
\item {\bf HS  model:} Only the 16  factorized states $
  i\in[0,15]$; all the 24 effects $j \in [0, 23]$;
\item {\bf Hybrid models:}  Only 2 entangled states and
  effects  are included,  i.e.  $i  \in[0,15]\cup \{20,
  22\}$   and  $j\in[0,15]\cup   \{20,   22\}$  or   $i
  \in[0,15]\cup \{21,  23\}$ and  $j\in[0,15]\cup \{21,
  23\}$;
\item {\bf Frozen Models:} Only one entangled state and
  effect is  included, i.e.  $i \in[0,15]  \cup \{i'\}$
  and   $j\in[0,15]\cup  \{j'\}$   with  $i'   =  j'\in
  [16,23]$.
\end{enumerate}
One  can now  easily  verify that  within the  selected
models, any other circuit gives positive probabilities.
We  are  now  in  a   position  to  fully  specify  the
reversible dynamics  $\mathcal{U}(S\otimes S)$  for the
bipartite  system  $S\otimes  S$, which  follows  as  a
simple    consequence   of    the   main    result   of
Ref.~\cite{GMCD10}.   One   has  that   any  reversible
channel  corresponds to  the tensor  product of  single
system   reversible   channels,   possibly   with   the
application of the \emph{swap}  map $W$, namely the map
that exchanges the two subsystems. In formula one has
\begin{equation}\label{eq:reversible}
  \begin{aligned}
    \mathcal{U}(S\otimes S)\subseteq  &\{ W^i(U_j^{s_1}
    \otimes
    U_k^{s_2})\}\\
    &i=0,1,\; 0\leq j,k\leq 3,\; s_1,s_2=\pm.
  \end{aligned}
\end{equation}
Therefore,    reversible    channels   cannot    create
entanglement,  i.e.   transform factorized  states  and
effects   into    factorized   states    and   effects,
respectively.   This  creates a  clear-cut  distinction
between factorized and  entangled states (and effects),
as the ones cannot be mapped into the others.

Now   we   can    exploit   the   characterization   in
Eq.~\eqref{eq:reversible}   to  specify   the  set   of
reversible channels,  defined as the largest  subset of
$\mathcal{U}(S  \otimes   S)$  that  keeps   the  model
self-consistent.  By direct inspection we get
\begin{enumerate}
\item   PR  Model:   $\mathcal{U}_{PR}(S\otimes  S)   =
  \mathcal{U}(S \otimes S)$;
\item   HS  Model:   $\mathcal{U}_{HS}(S\otimes  S)   =
  \mathcal{U}(S \otimes S)$;
\item  Hybrid Models:  $\mathcal{U}_{HY}(S\otimes S)  =
  \{U_{k_1}^+      \otimes     U_{k_2}^+\}$,      where
  $k_1,k_2 = 0, 2$;
\item  Frozen Models:  $\mathcal{U}_{FR}(S\otimes S)  =
  \{W^i(U_0^+  \otimes  U_0^+)\}$,   where  $i=0,1$  if
  $x\in\{16,17,18,19\}$, while $i=0$ otherwise.
\end{enumerate}

We can  now justify  the name ``Frozen  Models'', since
these  models  comprise  only  the  trivial  reversible
dynamics.

The focus  of this manuscript  is on the HS  Model that
can be regarded  as the counterpart of the  PR Model in
the following  sense.  In Ref.~\cite{SB10}  the authors
point out  the existence of a  trade-off between states
and  effects in  the  PR Model  and  more generally  in
arbitrary non-local  theories colloquially  referred to
as  box world.  They show  that while  box world  allow
states whose space-like  correlations are stronger than
quantum theory, measurements in  box world are limited:
in  the  PR  Model   only  factorized  effects  can  be
observed.   On the  contrary,  measurements  in the  HS
Model can  contain entangled  effects, at the  price of
excluding all entangled states.

\subsection{Extremal measurements of the HS Model}
\label{sec:extmeas}

According        to        the        results        of
Section~\ref{sec:violations}, the HS Model violates the
no-hypersignaling principle if and  only if a violation
occurs  for  an   extremal  measurement  with  extremal
normalized effects.   Thus, we now turn  to the problem
of characterizing such measurements.

As  consequence of  Corollary~\ref{cor:counterexample},
for  the  elementary  system  $S$ there  are  only  two
possible extremal measurements with extremal normalized
effects, namely,  $\{ e_0,  e_1\}$ and $\{  e_2, e_3\}$
with uniform distribution $p = \frac12$.  Therefore, by
definition,  $S$ is  equivalent  to the  exchange of  a
classical bit.

Again         as        a         consequence        of
Corollary~\ref{cor:counterexample},  for the  bipartite
system   $S  \otimes   S$   there   are  fifteen   such
measurements   (modulo  equivalence   under  reversible
transformations).       They     are      listed     in
Table~\ref{tab:extmeas}  and  labeled   from  $M_0$  to
$M_{14}$.  The number of effects in each measurement is
indicated by the symbol $\#$  in the second column.  In
formula, the set of extremal measurements with extremal
normalized effects is given by
\begin{equation*}
  \label{eq:extmeas}
  \mathcal{M}_{HS}:=\{M_n\}_{n=0}^{14},\qquad
  M_n:=\{ p_y ,U E_y \},
\end{equation*}
where $U$ is any element of $\mathcal{U}_{HS} (S\otimes
S)  = \mathcal{U}(S\otimes  S)$  and the  probabilities
$p_y$ are explicitly listed in Table~\ref{tab:extmeas}.

\begin{table*}[htb!]
  \centering
  \begin{tabular}{| >{$}c<{$} | >{$}c<{$} ? >{$}c<{$} | >{$}c<{$} | >{$}c<{$} | >{$}c<{$} | >{$}c<{$} | >{$}c<{$} | >{$}c<{$} | >{$}c<{$} | >{$}c<{$} | >{$}c<{$} | >{$}c<{$} | >{$}c<{$} | >{$}c<{$} | >{$}c<{$} | >{$}c<{$} | >{$}c<{$} ? >{$}c<{$} | >{$}c<{$} | >{$}c<{$} | >{$}c<{$} | >{$}c<{$} | >{$}c<{$} | >{$}c<{$} |  >{$}c<{$} |}
    \hline
    \bf M &\bf \# & \bf E_0 & \bf E_1 & \bf E_2 & \bf E_3 & \bf E_4 & \bf E_5 & \bf E_6 & \bf E_7 & \bf E_8 & \bf E_9 & \bf E_{10} & \bf E_{11} & \bf E_{12} & \bf E_{13} & \bf E_{14} & \bf E_{15} & \bf E_{16} & \bf E_{17} & \bf E_{18} & \bf E_{19} & \bf E_{20} & \bf E_{21} & \bf E_{22} & \bf E_{23}\\
    \hline\hline
    0 & 2 &&&&&&&&&&&&&&&&& \frac12 && \frac12 &&&&& \\
    \hline
    1 & 4 &\nicefrac14&&\nicefrac14&&&&&&\nicefrac14&&\nicefrac14&&&&&&&&&&&&& \\
    \hline
    2 & 4 &\nicefrac14&&\nicefrac14&&&&&&&\nicefrac14&&\nicefrac14&&&&&&&&&&&& \\
    \hline
    3 & 6 &\nicefrac18&\nicefrac18&&&&&&&&&\nicefrac18&\nicefrac18&&&&&&&\nicefrac14&&&&&\nicefrac14 \\
    \hline
    4 & 6 &\nicefrac18&&&&&\nicefrac18&&&&&\nicefrac18&&&&&\nicefrac18&&&&&\nicefrac14&&&\nicefrac14 \\
    \hline
    5 & 6 &\nicefrac16&&&&&&&&&&\nicefrac16&&&&&&&\nicefrac16&\nicefrac16&&\nicefrac16&&&\nicefrac16 \\
    \hline
    6 & 7 &\nicefrac18&\nicefrac18&&&&&\nicefrac18&&\nicefrac18&&\nicefrac18&&&&&\nicefrac18&&&&&&&&\nicefrac14 \\
    \hline
    7 & 8 &\nicefrac1{12}&\nicefrac1{12}&&&\nicefrac1{12}&&&&&&\nicefrac16&&&&&\nicefrac1{12}&&&\nicefrac16&&\nicefrac16&&&\nicefrac16 \\
    \hline
    8 & 8 &\nicefrac1{12}&\nicefrac1{12}&&&&&\nicefrac16&&\nicefrac1{12}&\nicefrac1{12}&&&&&&\nicefrac16&\nicefrac16&&&&&&&\nicefrac16 \\
    \hline
    9 & 8 &\nicefrac16&\nicefrac1{12}&&&&&\nicefrac1{12}&&&\nicefrac1{12}&&\nicefrac16&&&\nicefrac1{12}&&&&\nicefrac16&&&&&\nicefrac16 \\
    \hline
    10 & 8 &\nicefrac18&&&&&\nicefrac18&&&&&&\nicefrac18&&&\nicefrac18&&&&\nicefrac18&\nicefrac18&\nicefrac18&&&\nicefrac18 \\
    \hline
    11 & 9 &\nicefrac1{12}&\nicefrac1{12}&&&\nicefrac1{12}&&\nicefrac1{12}&&&\nicefrac1{12}&\nicefrac1{12}&&&&&\nicefrac16&&&&&\nicefrac16&&&\nicefrac16 \\
    \hline
    12 & 9 &\nicefrac1{16}&\nicefrac1{16}&&&\nicefrac1{16}&&\nicefrac18&&&\nicefrac18&&&&&&\nicefrac3{16}&\nicefrac18&&&&\nicefrac18&&&\nicefrac18 \\
    \hline
    13 & 9 &\nicefrac1{12}&\nicefrac1{12}&&&\nicefrac1{12}&&&\nicefrac1{12}&&&\nicefrac1{12}&\nicefrac1{12}&&\nicefrac1{12}&\nicefrac1{12}&&&&\nicefrac13&&&&& \\
    \hline
    14 & 9 &\nicefrac1{10}&&\nicefrac1{10}&&&\nicefrac1{10}&&&&&&\nicefrac15&&\nicefrac1{10}&&&&&\nicefrac1{10}&\nicefrac1{10}&&&\nicefrac1{10}&\nicefrac1{10} \\
    \hline\hline
    \bf M &\bf \# & \bf E_0 & \bf E_1 & \bf E_2 & \bf E_3 & \bf E_4 & \bf E_5 & \bf E_6 & \bf E_7 & \bf E_8 & \bf E_9 & \bf E_{10} & \bf E_{11} & \bf E_{12} & \bf E_{13} & \bf E_{14} & \bf E_{15} & \bf E_{16} & \bf E_{17} & \bf E_{18} & \bf E_{19} & \bf E_{20} & \bf E_{21} & \bf E_{22} & \bf E_{23} \\
    \hline
  \end{tabular}
  \caption{Set of all extremal measurements (up to unitary transformations) for the bipartite HS Model, as given by Eq.~\eqref{eq:extmeas}. Measurements are labelled by $M$ and represented by a probability distribution $p$ over the set of extremal normalized effects $\{ E_j \}$. The number of non-null values of $p$ is also reported for convenience (in the column indicated by the symbol $\#$). We recall that factorized effects are those from $E_0$ to $E_{15}$ (included). Notice therefore that $M_1$ and $M_2$ are the only two instances with all effects factorized: they are also the only extremal measurements with extremal normalized effects possible in the PR Model (up to unitary transformations).}
  \label{tab:extmeas}
\end{table*}

The   set  of   extremal  measurements   with  extremal
normalized  effects   for  the  other  toy   models  is
necessarily    a   subset    of    those   listed    in
Table~\ref{tab:extmeas}.   For  example, the  PR  Model
allows  only  the  two  measurements  $M_1$  and  $M_2$
(again,  modulo  reversible transformations).   Indeed,
consistently with the  fact that the PR  model does not
contain entangled effects, $M_1$  and $M_2$ are the the
only instances whose  supporting normalized effects are
all factorized.

It is  relevant to observe  that the Hybrid  and Frozen
Models  include   $M_6$  or   one  of   its  equivalent
measurement  up  to reversible  transformations.   This
measurement  is   involved  in   the  no-hypersignaling
violation for the HS model  and by the same argument we
get the  same violation also  in the Hybrid  and Frozen
cases.
\end{document}